\documentclass[10 pt, journal]{IEEEtran}

\makeatletter
\def\ps@headings{%
	\def\@oddhead{\mbox{}\scriptsize\rightmark \hfil \thepage}%
	
	\def\@evenhead{\scriptsize\thepage \hfil \leftmark\mbox{}}%
	
	\def\@oddfoot{}%
	
	\def\@evenfoot{}}
	
\makeatother
\usepackage[hidelinks]{hyperref}

\usepackage[table]{xcolor}

\usepackage{float}  
\usepackage{epsfig,bm,amsmath,amssymb,graphicx,setspace}
\usepackage[numbers,sort&compress]{natbib}
\usepackage{color,placeins}
\usepackage{multirow} 
\usepackage{threeparttable}
\usepackage{dsfont}
\usepackage[linesnumbered,ruled,vlined]{algorithm2e}
\usepackage{algpseudocode}
\usepackage{colortbl}
\usepackage{color}
\usepackage{dblfloatfix}
\usepackage{turnstile}
\usepackage{multicol}
\usepackage{array}

\usepackage{enumerate}
\usepackage{turnstile}
\usepackage{subcaption}
\usepackage{arydshln,paralist}
\usepackage{soul,tabularx,booktabs}
\usepackage{csquotes}
\usepackage{multirow}
\usepackage{adjustbox}
\usepackage{float}

\usepackage[n,advantage,operators,sets,adversary,landau,probability,notions,logic,ff,mm,primitives,events,complexity,asymptotics,keys]{cryptocode}
\usepackage{amsmath}
\usepackage{amssymb}

\usepackage{xcolor}
\usepackage{verbatim}
\usepackage[colorinlistoftodos]{todonotes} %Add disable parameter to turn off comments
\usepackage{rotating}
\definecolor{usethiscolorhere}{rgb}{0.86666,0.78431,0.78431}
\usepackage{tikz}
\usetikzlibrary{mindmap}
\usetikzlibrary{calc,positioning}
\usepackage{lipsum,adjustbox}
\usetikzlibrary{decorations.pathmorphing}

\usepackage{pifont}% http://ctan.org/pkg/pifont
\newcommand{\cmark}{\ding{51}}%
\newcommand{\xmark}{\ding{55}}%

\usepackage{amsthm}
\newtheorem{theorem}{Theorem}
\newtheorem{property}{Property}

\newcommand{\specialcell}[2][c]{% 
	\begin{tabular}[#1]{@{}c@{}}#2\end{tabular}}

\makeatother

\usepackage{amsmath}

\begin{document}

\title{\fontsize{24pt}{26pt}\selectfont \textit{LiteAtt}: A Peer-to-Peer Self-Attestation Framework and Handshake Protocol for Connected IoT Devices}

% \author{
%     \IEEEauthorblockN{Varun Kohli$^\dagger$, Muhammad Naveed Aman$^\ddagger$,~\IEEEmembership{Senior Member,~IEEE}, Biplab Sikdar$^\dagger$,~\IEEEmembership{Fellow,~IEEE}
%     \IEEEauthorblockA{$^\dagger$Department of Electrical and Computer Engineering, National University of Singapore, Singapore.}} \\
%     \IEEEauthorblockA{$^\ddagger$School of Computing, University of Nebraska-Lincoln, USA.}
% }

\author{Varun Kohli,~\IEEEmembership{Graduate Student Member,~IEEE}, Biplab Sikdar,~\IEEEmembership{Fellow,~IEEE}%
\thanks{V. Kohli is with the Institute for Infocomm Research ($I^2R$), Agency for Science, Technology and Research (A*STAR), 1 Fusionopolis Way, \#21-01 Connexis, Singapore 138632, and the Department of Electrical and Computer Engineering, National University of Singapore, Singapore 117417 (email: kohliv@a-star.edu.sg, varun.kohli@u.nus.edu).}
\thanks{B. Sikdar is with the Department of Electrical and Computer Engineering, National University of Singapore, Singapore 117417 (e-mail: bsikdar@nus.edu.sg).}}

\maketitle

\begin{abstract}
As the Internet of Things (IoT) becomes an integral part of critical infrastructure and commercial services, runtime firmware attestation of constituent Micro-Controllers (MCUs) has become instrumental in maintaining security and trust. Most prior works assume computational limitations on the MCUs and rely on a remote verifier to perform complex computation. This introduces a centralized point of failure, round-trip latency, and the burden of maintaining golden reference states at the recipient, even in recent Peer-to-Peer (P2P) and Self-Attestation (SA) schemes. This is avoidable for modern MCUs such as Arm Cortex-M, which, although battery-operated, feature security and intelligence capabilities, including Trusted Execution Environments (TEE) and embedded Tiny Machine Learning (TinyML) inference. Leveraging such provisions, this paper presents \textit{LiteAtt}, a verifier-less, P2P-SA framework and protocol for modern IoT MCUs that folds directly into the connection handshake between IoT MCUs. Each MCU runs a quantized TinyML Autoencoder (TinyAE) within its TEE to evaluate the runtime SRAM state. SA verdicts are securely bound to the handshake transcript context, enabling stateless verification at the peer node. The proposed protocol yields mutually authenticated and firmware-attested communication without traditional latency and reference distribution overheads while ensuring mutual authentication, forward secrecy, confidentiality, integrity, SRAM privacy, and defense against replay, SA report spoofing, and time-of-check-time-of-use (TOCTOU) impersonation attacks. We report an optimized per-handshake latency, energy consumption, and peak memory footprint of 26.3-294.9ms, 2.65-9.35mJ, and 4.91KB, respectively, across three Arm Cortex-M boards. Further, the suggested TinyAE model achieves an average accuracy of 99.42\%, F1 score of 99.70\%, TPR of 99.45\%, and TNR of 95.14\% on SRAM attestation datasets covering single-node and swarm sensing, actuation, and cryptographic IoT applications. We also evaluate the models on simulated runtime attacks, random perturbation, and adversarial ML.
\end{abstract}

\begin{IEEEkeywords}
Internet of Things, Critical Infrastructure Security, Attestation, Embedded Machine Learning
\end{IEEEkeywords}

\section{Introduction}
\label{sec:intro}

The Internet of Things (IoT) has paved its way into critical infrastructure and the consumer landscape, with millions of microcontrollers (MCUs) deployed in homes, industry, healthcare, transportation, energy, and security networks across the world to facilitate data collection and processing, automate critical operations, and enhance our quality of life \cite{sethi2017internet}. This has led to threat actors leveraging inadequate security provisions in IoT devices via software attacks to disrupt operations or extort data for monetary gains \cite{wetzels2023insecure,hassija2019survey}, such as the Mirai Botnet incident \cite{antonakakis2017understanding}. To address such threats to security, user privacy, and critical infrastructure operations, researchers have proposed several approaches to verify the integrity of IoT device firmware via Remote Attestation (RA).

As depicted in Fig. \ref{fig:typical_network}, RA typically involves a request-response routine wherein a remote \textit{verifier}, i.e., a trusted device operated by an external security operator, sends an attestation request to a \textit{prover}, i.e., a vulnerable, user-operated device that in turn measures the state of the system and sends it to the verifier for evaluation. Existing RA methods differ primarily in the evidence the prover collects. Program-memory hashing schemes compute one or more hash iterations over the MCU's flash memory \cite{seshadri2004swatt,seshadri2005pioneer}, optionally combined with timing measurements \cite{castelluccia2009difficulty} or hardware roots of trust \cite{tan2011tpm,agrawal2015program,aman2020hatt}. Control-Flow Attestation (CFA) instead attests the runtime execution path using complete or partial Control-Flow Graph (CFG), and Control-Flow Path (CFP) hashes \cite{yadav2023whole,chilese2024one,ammar2025sok}, with recent works anchoring evidence delivery in a Trusted Execution Environment (TEE) \cite{caulfield2024traces}. Finally, Static Random Access Memory (SRAM)-based methods \cite{aman2022machine,iqbal2024ram,kohli2024swarm} attest firmware during runtime using their SRAM footprints.

\begin{figure}[t]
    \centering
    \includegraphics[width=0.7\linewidth]{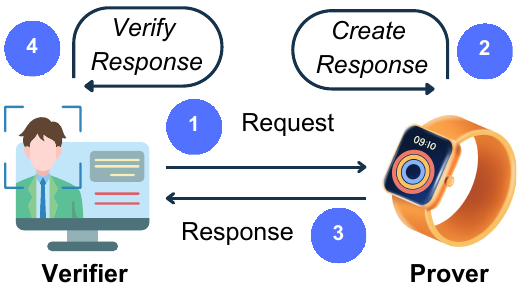}
    \caption{Typical network model for remote firmware attestation. A central \textit{verifier} remotely attests each \textit{prover} in the network. \textit{LiteAtt} moves away from this toward decentralized, P2P-SA.}
    \label{fig:typical_network}
\end{figure}

The reliance on an external verifier is, however, a structural limitation of the RA model, introducing four practical problems: \textit{\textbf{First}}, as a trusted central entity, the verifier becomes a centralized point of failure that must remain secure and accessible to remotely deployed provers. \textit{\textbf{Second}}, RA adds round-trip latency to every attestation coordinated with the central verifier before any Peer-to-Peer (P2P) communication can proceed, since the prover must collect and transmit evidence and the verifier must perform comparison with up-to-date reference states, hashing, or ML inference. \textit{\textbf{Third}}, sharing runtime evidence directly exposes potentially sensitive data \cite{aman2022machine,iqbal2024ram,kohli2024safe}. These limitations make it difficult to integrate RA directly into IoT services, and although existing work on P2P \cite{ferro2024safehive,shepherd2021lira} and Self-Attestation (SA) \cite{zhang2025flashattest,ibrahim2017seed} solve some of these issues, they typically require the recipient to maintain copies of the latest golden reference states of other devices, which is not scalable in large and dynamic IoT deployments.

Efforts to create powerful, lightweight MCUs have made P2P-SA with stateless recipients a practically feasible attestation model thanks to larger memory, quantized embedded Tiny Machine Learning (TinyML) for edge inference \cite{lin2023tiny}, and Trusted Execution Environments (TEE) for isolated and secure execution of critical software components \cite{jauernig2020trusted, caulfield2024traces}. Furthermore, IoT devices are typically battery-operated and make tens to thousands of connections per day. Among the runtime evidences used in prior work, we find SRAM-based evidence to be particularly well-suited to P2P-SA for the following reasons: \raisebox{.5pt}{\textcircled{\raisebox{-.8pt} {1}}} Smaller size, faster traversal and processing compared to flash memory and control-flow evidence. \raisebox{.5pt}{\textcircled{\raisebox{-.8pt} {2}}} Captures runtime information, exposing code modification and runtime threats. \raisebox{.5pt}{\textcircled{\raisebox{-.8pt} {3}}} Available on every MCU and does not require additional hardware-assisted logging. \raisebox{.5pt}{\textcircled{\raisebox{-.8pt} {4}}} Deterministic footprint of the same firmware on physical and digital twin hardware \cite{kohli2024intelligent, kohli2024swarm}. \raisebox{.5pt}{\textcircled{\raisebox{-.8pt} {5}}} Utility in attestation has been demonstrated by prior work \cite{iqbal2024ram, kohli2024swarm}.
 
\textbf{\textit{Contributions.}} This paper addresses the limitations of prior work on RA, P2P, and SA by proposing a novel P2P-SA approach that relies on TinyML, TEE, and SRAM, folding directly into the communication handshake between IoT devices. It makes the following contributions:

\begin{enumerate}
    \item A lightweight P2P-SA framework called \textit{LiteAtt} is proposed for decentralized runtime attestation, folded seamlessly into the connection handshakes between devices in critical infrastructure and commercial services to ensure \textit{``firmware-attested"} IoT services. Devices run a secure application ($\text{App}_{SA}$) within the TEE, generating a signed SA report ($\sigma$) based on the evaluation of truncated, Discrete Cosine (DCT)-compressed SRAM data, heap, and stack evidence using TinyML models during runtime. \textit{LiteAtt} is lightweight, scalable, and leverages SRAM properties of digital/physical twin devices for easy post-deployment updates and privacy. 
    \item A secure P2P-SA protocol is proposed that ensures mutual authentication, forward secrecy, message security, SRAM privacy, and defense against replay and impersonation attacks. A game-based security analysis is presented to prove these properties, and sensitivity analysis is conducted on the TinyML models using random and adversarial ML strategies.
    \item Discriminative performance and TinyML robustness is demonstrated using SRAM data collected from sensor, processing, actuation, and cryptographic applications on real IoT boards, covering 23 safe firmware, 34 modifications to data and functional dependencies, and 7 data injection attacks. We also simulate Data-oriented Programming (DOP), Return-oriented Programming (ROP), Jump-oriented Programming (JOP) attacks, random perturbation and adversarial ML for sensitivity analysis.
    \item Thorough runtime experiments are conducted using real Arm Cortex-M MCU boards, including a TEE-enabled Arduino Portenta C33, a dual-processor Arduino Portenta H7, and an Arduino Nano 33 BLE Sense to highlight attestation latency, memory, and energy overheads across various MCU capabilities.
\end{enumerate}

The remainder of this paper is organized as follows: Section \ref{sec:related} highlights the research gap in existing works. Section \ref{sec:design} discusses our goals for \textit{LiteAtt}. Section \ref{sec:background} discusses the key foundational concepts of the proposed method. Section \ref{sec:network} introduces the network and threat model. Section \ref{sec:proposed} presents the proposed method, followed by Section \ref{sec:setup}, which details the experimental testbed. Sections \ref{sec:results} and \ref{sec:security} present our results and security analysis, respectively. Section \ref{sec:limitations} highlights the limitations, and the paper concludes in Section \ref{sec:conclusion}.

\begin{table*}[t]
\centering
\caption{Comparison of representative attestation approaches with \textit{LiteAtt} in terms of attestation model, evidence types, attestation method, runtime attack coverage, evidence overhead on prover, stateless recipients, ease of post-deployment updates, and P2P integration by design.}
\label{tab:related}
\small
\setlength{\tabcolsep}{5pt}
\renewcommand{\arraystretch}{1.15}
\resizebox{\linewidth}{!}{\begin{tabular}{l c c c c c c c c}
\toprule
\textbf{Reference} & \textbf{Model} & \textbf{Evidence} & \textbf{Method} & \textbf{Runtime} & \textbf{Evidence} & \textbf{Stateless} & \textbf{Easy} & \textbf{P2P} \\
& & & & \textbf{Attacks} & \textbf{Overhead} & \textbf{Recipient} & \textbf{Update} & \textbf{Design} \\
\midrule
SWATT \cite{seshadri2004swatt}             & RA   & Flash         & Hash, timing       & \xmark & High    & \xmark & \xmark & \xmark \\
Pioneer \cite{seshadri2005pioneer}         & RA   & Flash         & Hash, timing       & \xmark & High    & \xmark & \xmark & \xmark \\
SCUBA \cite{seshadri2006scuba}             & RA   & Flash         & Hash                & \xmark & High    & \xmark & \xmark & \xmark \\
SAKE \cite{seshadri2008sake}               & RA   & Flash         & Hash                & \xmark & High    & \xmark & \xmark & \xmark \\
TPM-RA \cite{tan2011tpm,agrawal2015program}& RA   & Flash         & Signed hash         & \xmark & High    & \xmark & \xmark & \xmark \\
TyTAN \cite{brasser2015tytan}              & RA   & Identity      & MAC                 & \xmark & Low     & \xmark & \xmark & \xmark \\
HAtt \cite{aman2020hatt}                   & RA   & Flash         & Hash, PUF          & \cmark & High    & \xmark & \xmark & \xmark \\
SMARM \cite{carpent2018remote}             & RA   & Flash         & Shuffled hash       & \cmark & High    & \xmark & \xmark & \xmark \\
C-FLAT \cite{abera2016c}                   & RA   & CFP           & CFP hash           & \cmark & High    & \xmark & \xmark & \xmark \\
BLAST \cite{yadav2023whole}                & RA   & CFG           & Whole-CFG hash      & \cmark & High    & \xmark & \xmark & \xmark \\
RAGE \cite{chilese2024one}                 & RA   & CFG           & VGAE inference      & \cmark & High    & \xmark & \xmark & \xmark \\
TRACES \cite{caulfield2024traces}          & RA   & CFP           & TEE-logged trace    & \cmark & High    & \xmark & \xmark & \xmark \\
Aman et al. \cite{aman2022machine}         & RA   & SRAM          & MLP inference       & \cmark & Low     & \xmark & \xmark & \xmark \\
Iqbal et al. \cite{iqbal2024ram}           & RA   & SRAM          & VAE inference       & \cmark & Low     & \xmark & \xmark & \xmark \\
Swarm-Net \cite{kohli2024swarm}            & RA   & SRAM          & GT inference       & \cmark & Low     & \xmark & \xmark & \xmark \\
SAFE-IoT \cite{kohli2024safe}              & RA   & SRAM          & MoE inference       & \cmark & Low     & \xmark & \xmark & \xmark \\
\midrule
SEED \cite{ibrahim2017seed}                & SA   & Flash         & Periodic hash, MAC & \xmark & High    & \xmark & \xmark & \xmark \\
ERASMUS \cite{carpent2018erasmus}          & SA   & Flash         & Periodic hash log   & \xmark & High    & \xmark & \xmark & \xmark \\
SARA \cite{dushku2020sara}                 & SA   & Flash         & Timestamped hash    & \xmark & High    & \xmark & \xmark & \xmark \\
SIMPLE \cite{ammar2020simple}              & SA   & Flash         & HMAC                & \xmark & High    & \xmark & \xmark & \xmark \\
FlashAttest \cite{zhang2025flashattest}    & SA   & Flash         & Timestamped hash    & \xmark & High    & \xmark & \xmark & \xmark \\
SAFEHIVE \cite{ferro2024safehive}          & SA   & Flash         & DHT-stored hash   & \xmark & High    & \xmark & \xmark & \cmark \\
LIRA-V \cite{shepherd2021lira}             & SA   & Flash         & Signed hash         & \xmark & High    & \xmark & \xmark & \cmark \\
TLS-Att \cite{tschofenig2025using}         & SA   & Agnostic & Signed quote  & \specialcell{Evidence\\dependent} & \specialcell{Evidence\\dependent}  & \xmark & \xmark & \cmark \\
\midrule
\textbf{\textit{LiteAtt} (ours)}           & \textbf{SA} & \textbf{SRAM} & \textbf{TinyML in TEE} & \cmark & \textbf{Low} & \cmark & \cmark & \cmark \\
\bottomrule
\end{tabular}}
\end{table*}

\section{Related Works}
\label{sec:related}

This section discusses the main evidence categories used in RA, followed by an introduction to existing SA and P2P approaches, and the limitations of each category concerning seamless P2P integration in IoT networks with dynamic topologies and frequent firmware updates. Table \ref{tab:related} summarizes the key comparison of prior works with concerning P2P-SA practicality. We also discuss design goals for \textit{LiteAtt}. 

\subsection{Flash-Memory Attestation}
Flash-memory attestation produces evidence by hashing the device's program memory. SWATT \cite{seshadri2004swatt} pioneered this approach by computing several checksum iterations using pseudo-random traversal, requiring precise timing verification by the verifier and incurring a prover runtime of up to 85 seconds and high energy cost. Castelluccia et al. \cite{castelluccia2009difficulty} highlighted the difficulty of timing-based software RA, noting the vulnerability to proxy attacks and the challenges of variable-latency networks. SCUBA \cite{seshadri2006scuba}, SAKE \cite{seshadri2008sake}, and Pioneer \cite{seshadri2005pioneer} share similar problems and are susceptible to Time-of-Check-to-Time-of-Use (TOCTOU) attacks, in which transient malware erases itself between successive attestations \cite{de2021toctou}. Trusted Platform Module (TPM)-anchored variants \cite{tan2011tpm, agrawal2015program} leverage tamper-resistant hardware to strengthen these guarantees, while hybrid schemes reduce hardware cost. For instance, TyTAN \cite{brasser2015tytan} provides a tiny trust anchor for isolated task loading and Message Authentication Code (MAC)-based identity attestation, and HAtt \cite{aman2020hatt} uses a Physically Unclonable Function (PUF) to anchor secrets and randomizes the selection of flash memory blocks to detect roving malware. SMARM \cite{carpent2018remote} shuffles memory measurements that catch roving malware, but its attestation routine takes over 50 seconds on the prover. 

\textit{Limitations:} Flash memory-based methods incur high overheads on the prover, making them unsuitable for P2P attestation of devices that offer real-time services in critical infrastructure and commercial services. They also require the recipient (verifier or peer) to host the up-to-date reference hash states, which is impractical in large-scale, dynamic deployments and under frequent firmware updates. Further, attacks such as ROP, DOP, JOP, and data injection go undetected since they do not tamper with the firmware itself \cite{de2021toctou, ammar2025sok}.

\subsection{Control-flow Attestation}
CFA addresses the runtime gap by collecting evidence about the program's execution path rather than its static code \cite{ammar2025sok}. C-FLAT \cite{abera2016c} hashes the sequence of branch destinations executed at runtime which the verifier compares against precomputed golden reference hashes derived from the program's CFG. BLAST \cite{yadav2023whole} extends this to whole-program CFGs. RAGE \cite{chilese2024one} replaces exact path matching with a Variational Graph Autoencoder (VGAE) over partial CFGs, achieving 91\% and 98\% detection for ROP and DOP, respectively. TRACES \cite{caulfield2024traces} leverages Arm TrustZone-M to guarantee delivery of periodic runtime reports even from compromised provers. 

\textit{Limitations:} Trace logging introduces nontrivial prover overhead that hinders real-time, seamless P2P attestation. Further, most CFA techniques require the recipient to maintain the latest reference states of ML models for the latest firmware state of the prover.

\subsection{Runtime SRAM Attestation}
SRAM-based attestation collects volatile memory states rather than static code or control-flow traces, capturing both code modifications and runtime attacks that perturb global, static, and dynamically allocated data, or function and interrupt return addresses. Aman et al. \cite{aman2022machine} pioneered this using ML classifiers trained on SRAM traces to distinguish between normal and malicious samples at 96\% accuracy. Iqbal et al. \cite{iqbal2024ram} improved upon this using Variational Autoencoders (VAE), achieving 100\% accuracy on their single-node dataset \cite{0nze-r023-24}. Swarm-Net \cite{kohli2024swarm} used Graph Transformers (GT) for swarm RA, achieving 99.7\% accuracy on node, network, and propagated anomalies on their swarm SRAM dataset \cite{gmee-vj41-24}. SAFE-IoT \cite{kohli2024safe} used a Mixture-of-Experts (MoE) architecture for swarm RA, achieving 95\% accuracy.

\textit{Limitations:} While lightweight, all prior work transmits raw SRAM data to the verifier, exposing potentially sensitive user data to third parties, and requiring ML models for inference on the recipient. Furthermore, some approaches rely only on the data section (.data and .bss) of the SRAM \cite{kohli2024safe,kohli2024swarm}, which miss runtime attacks that tamper with dynamic variables and function or interrupt return addresses.  

\subsection{P2P and Self-Attestation}
SA shifts the complex attestation computation from a recipient to the prover itself. Most existing SA approaches build on flash memory evidence. SEED \cite{ibrahim2017seed} periodically evaluates program hashes via a hardware circuit trigger. ERASMUS \cite{carpent2018erasmus}, based on the SMART architecture \cite{eldefrawy2012smart}, periodically attests flash memory at 0.5 s/KB. SARA \cite{dushku2020sara} uses a hardware-protected clock for timestamped flash hashes. SIMPLE \cite{ammar2020simple} attests flash memory at 0.26 s/KB using memory isolation. FlashAttest \cite{zhang2025flashattest} uses flash devices with authenticated timestamps at a 0.24 s/KB latency. A few works including SACHa \cite{vliegen2019sacha} and Usama et al. \cite{usama2024run} propose bitstream attestation and a Finite State Machine (FSM)-based approach, respectively, for FPGA runtime integrity.

Beyond these approaches, a parallel line of work has explored P2P attestation, where devices in a network validate each other's integrity without a dedicated verifier. SAFEHIVE \cite{ferro2024safehive} distributes reference values across IoT swarm members using a Distributed Hash Table (DHT). LIRA-V \cite{shepherd2021lira} proposes mutual attestation for constrained RISC-V devices, using program memory and the Physical Memory Protection (PMP) primitive as a trust anchor, achieving bi-directional attestation of 64-256KB of program memory in 11-32s. At the protocol level, \cite{tschofenig2025using} binds TEE-produced attestation evidence directly to the Transport Layer Security (TLS) 1.3 handshake, enabling peers to evaluate their counterpart's runtime state against a database of reference values during the connection handshake. 

\textit{Limitations:} Most prior work uses flash memory as evidence and inherits its limitations, and requires reference hash states on the recipient.

\section{Design Goals}
\label{sec:design}
Prior work has one or more limitations in terms of its evidence overhead, reference state maintenance on the recipient, attack coverage, data privacy, and applicability to P2P handshakes. We define the following design goals for \textit{LiteAtt}:
\begin{enumerate}
    \item \textbf{\textit{Decentralized, verifier-less P2P architecture:}} Eliminate the reliance on third-party verifiers in P2P settings.
    \item \textbf{\textit{Stateless recipient and scalability:}} Avoid requiring recipient devices to store, maintain, and check large databases of golden references or execute complex ML inference, which are fundamentally unscalable in large and dynamic IoT deployments with frequent firmware updates. Instead, trust shifts to the TEE and the certification provided by the IoT vendor's Certificate Authority (CA).
    \item \textbf{\textit{Comprehensive runtime threat coverage:}} Capture attacks that impact the runtime execution of the IoT firmware. These include code changes that change data and functional dependencies, ROP, DOP, and JOP attacks, and data injection.
    \item \textbf{\textit{Minimal resource and latency overheads:}} Provide lightweight processing suitable for real-time applications on battery-operated devices, bypassing the heavy overhead of continuous trace logging or flash traversal.
    \item \textbf{\textit{Data and SRAM privacy preservation:}} Avoid transmitting sensitive data or SRAM contents with the peer during attestation to protect user privacy.
    \item \textbf{\textit{Resilient post-deployment lifecycle updates:}} Ensure that post-deployment updates do not require global re-distribution of states across the rest of the network nodes.
\end{enumerate}

\textit{LiteAtt} addresses these problems by leveraging SRAM, TinyML and TEE: \textbf{First, at the protocol-level,} the attestation is folded into the connection handshake between two IoT devices. The SA outcome is computed inside the prover's TEE and encoded as a signed report, reducing the recipient's task to just validation without the need for it to maintain up-to-date ML model or golden reference states. \textbf{Second, at the evidence level,} the prover-side SRAM evidence is cheap to fetch and process. Data, heap and stack sections are evaluated using TEE-hosted TinyML models to capture a wide range of attacks. The resulting \textit{mutually-authenticated and firmware-attested handshake} runs in the order of $10^2ms$ and $10^1mJ$ on a Cortex-M7 board, making it suitable for integration into routine connection handshakes between IoT devices in critical infrastructure and commercial services. \textbf{Third, for maintenance,} we leverage the deterministic properties of SRAM footprints from digital/physical twin devices to create a reliable, scalable and privacy-preserving post-deployment update system.

\begin{figure}
    \centering
    \includegraphics[width=0.6\linewidth]{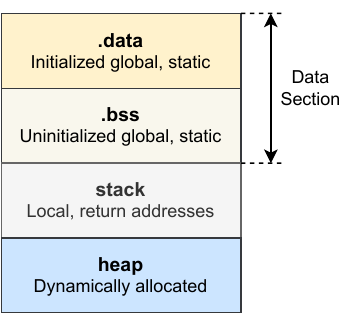}
    \caption{Logical sections of an SRAM.}
    \label{fig:sections}
\end{figure}

\section{Background}
\label{sec:background}

This section presents two empirically-grounded properties of the SRAM that underlie \textit{LiteAtt}'s evidence selection with regard to privacy and ease-of-update.

SRAM is a type of volatile memory that stores runtime data including global and static variables, dynamically allocated variable, and return addresses of functions and interrupts, organized into the logical sections shown in Fig. \ref{fig:sections}. The data section constitutes $.data/.bss$, whose size is determined by the global and static variable dependencies of the firmware, while the $stack$ and $heap$ sections are used dynamically. A firmware utilizes the SRAM according to the variable and function definitions in its code, and consequently leaves a firmware-specific footprint on the SRAM during its runtime. Prior empirical studies on SRAM-based attestation and fingerprinting \cite{aman2022machine, iqbal2024ram, kohli2024swarm, kohli2024safe, holcomb2008power, kohli2024intelligent} support the following two properties, which we adopt as the basis for $\text{App}_{SA}$'s design:

\begin{property}
\label{prop:2}
    Twin devices with the same firmware leave similar patterns in the SRAM data sections \cite{kohli2024safe, kohli2024swarm}.
\end{property}
Let $\mathcal{M}_1, \mathcal{M}_2$ be two physical or digital twin IoT devices (identical design and functionality), and let $\mathcal{F}$ denote the firmware binary loaded onto both. If $\mathcal{F}$ initializes and uses SRAM deterministically, the SRAM contents on $\mathcal{M}_1$ and $\mathcal{M}_2$ exhibit similar patterns.

\begin{property}
\label{prop:3}
    Physical twin devices have distinct initializations of the SRAM \cite{holcomb2008power, kohli2024intelligent}.
\end{property}
Under the same setup as Property \ref{prop:2}, the SRAM contents on $\mathcal{M}_1$ and $\mathcal{M}_2$ exhibit distinct power-up initializations due to minute hardware imperfections. However, updates to the memory by the firmware follow deterministic patterns as stated in Property \ref{prop:2}

\begin{figure}[t]
    \centering
    \includegraphics[width=\linewidth]{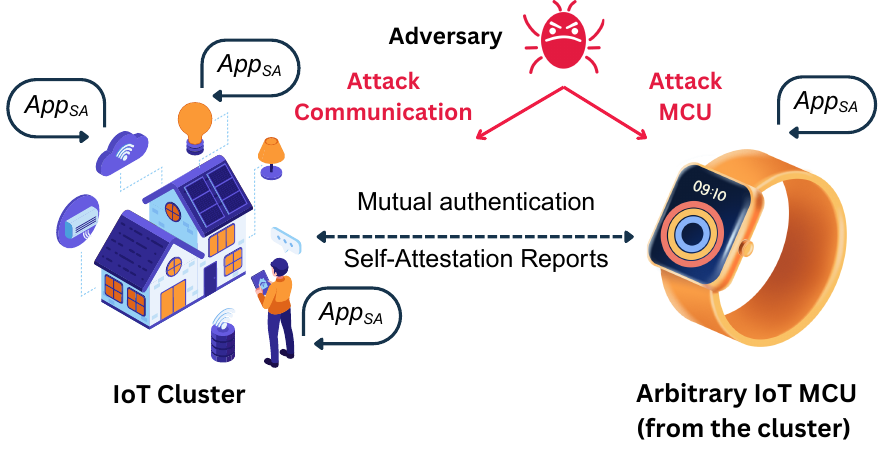}
    \caption{Overview of the IoT network model for SA. Capable MCUs in the cluster furnish SA reports during the connection handshake to enhance trust.}
    \label{fig:network}
\end{figure}

\begin{figure*}[t]
    \centering
    \includegraphics[width=0.85\linewidth]{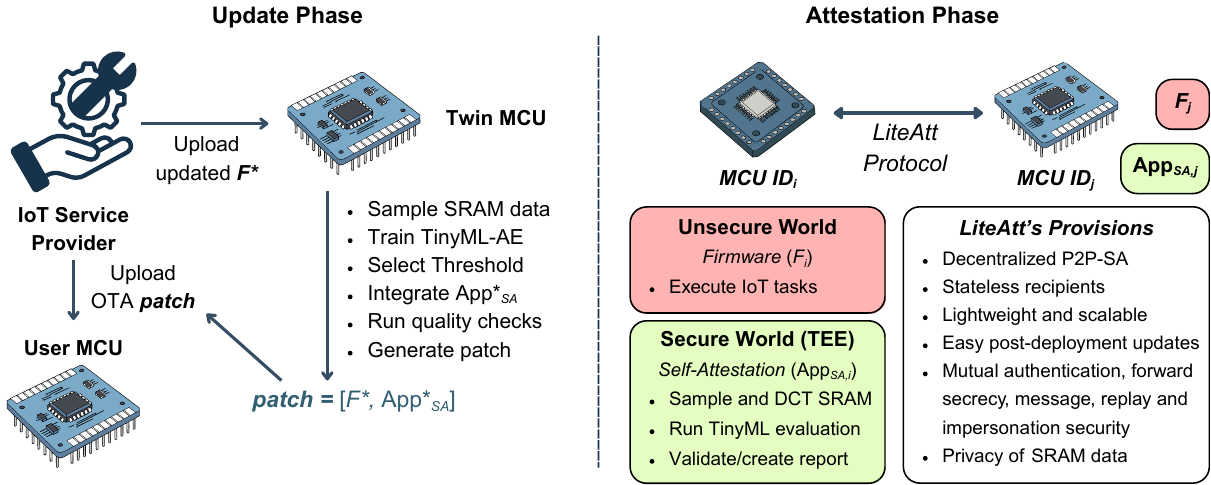}
    \caption{Overview of the proposed \textit{LiteAtt} attestation framework. \textit{LiteAtt} enables trust in identity and firmware, incurs low overheads, ensures SRAM privacy, and enables post-deployment updates.}
    \label{fig:proposed}
\end{figure*}

Property \ref{prop:2} enables IoT vendors to train the TinyML models on data from twin devices, preserving the privacy of user SRAM and enabling ease of post-deployment updates. Property \ref{prop:3} appears to preclude attesting the $stack/heap$ region, since its power-up state differs across twins. However, this device-specific variation stems from independent, cell-level hardware imperfections and is therefore spatially high-frequency, whereas the firmware-determined component of the stack/heap is comparatively low-frequency and structured. \textit{LiteAtt}'s DCT-based compression (Section \ref{sec:proposed}) retains only the low-frequency coefficients, attenuating the device-specific initialization noise while preserving the firmware-determined structure, and thereby extends attestation to both the $.data/.bss$ and $stack/heap$ sections across twins.

\section{Network and Threat Model}
\label{sec:network}
Fig. \ref{fig:network} presents the P2P-SA network model considered in this paper. There are two types of participants in this model, namely, an IoT cluster and an adversary.
\begin{enumerate}
    \item \textit{IoT Cluster ($\mathcal{C}_{IoT}$):} A dynamic cluster of $n$ MCUs, where each MCU ($ID_i, i\in[n]$) is equipped with its application firmware binary ($\mathcal{F}$) and an SA application binary ($\text{App}_{SA}$). We assume that the MCUs can run TinyML inference, and host a TEE for secure execution and storage. Each MCU is provisioned with a long-term secret and public key-pair ($SK_i, PK_i$) and a certificate ($\text{Cert}_i$) signed by the IoT vendor's Certificate Authority (CA), enabling certificate-based authentication. During the connection handshake, two connecting MCUs (say, $ID_i$ and $ID_j$) act as \textit{provers} ($\mathcal{P}_i$ and $\mathcal{P}_j$, respectively). Each device presents a TEE-signed SA report ($\sigma$) with its counterpart to establish trust in its firmware. More details of the proposed P2P-SA algorithm and protocol are presented in Section \ref{sec:proposed}.
    \item \textit{Adversary ($\mathcal{A}$):} A malicious entity that attempts to undermine firmware attestation. We assume that $\mathcal{A}$ can modify or replace $\mathcal{F}$ in the unsecure region of the IoT MCUs to malware ($\mathcal{F}_m$). $\mathcal{A}$ may also target the communication between MCUs, i.e., it may eavesdrop, intercept, modify, replay or drop messages between two MCUs, and may attempt to impersonate either MCU during the handshake. We assume that $\mathcal{A}$ cannot extract or tamper with the TEE contents and is computationally bounded to polynomial time. Side-channel attacks are assumed to be mitigable through constant-time software and split-cache implementations \cite{weissteiner2025teecorrelate} and are not part of the attack scope.
\end{enumerate}

\section{Proposed P2P-SA Framework: \textbf{\textit{LiteAtt}}}
\label{sec:proposed}
This section presents the proposed SA framework called \textit{LiteAtt}, which includes the update (or setup) phase and attestation phase, the SA algorithm ($\text{App}_{SA}$), and the \textit{LiteAtt} protocol. Fig. \ref{fig:proposed} provides an overview of the framework. 

\subsection{Update Phase}

Upon developing the first or updated firmware $\mathcal{F}^*_j$ for $ID_j$, the IoT service provider must correspondingly create the attestation binary $\text{App}_{SA,j}$. This is done as follows:
\begin{enumerate}
    \item Upload $\mathcal{F}^{*}_j$ to $Twin_j$ and collect a data set ($D$) of SRAM snapshots captured during the runtime of $\mathcal{F}^*_j$ on $Twin_j$. As shown in Fig. \ref{fig:sections}, a snapshot comprises a statically-allocated $.data/.bss$ region and a dynamically-allocated $stack/heap$ region, which \textit{LiteAtt} attests independently. Each $L$-byte snapshot $B_0,\dots,B_{L-1}$ is partitioned at the $.data/.bss$ boundary $\lambda$ into a $.data/.bss$ section (bytes $[0,\lambda)$) and a $stack/heap$ section (bytes $[\lambda,L)$). To obtain a fixed-length, low-overhead representation irrespective of a section's byte length $\ell$, each section is compressed using a Type-II Discrete Cosine Transform (DCT), retaining only its $F$ lowest-frequency coefficients:
    \begin{equation}
    \label{eq:dct}
        S_k = \alpha_k\sum_{n=0}^{\ell-1}\frac{B_n}{255}\cos\!\left[\frac{\pi(2n+1)k}{2\ell}\right], \;\; k\in\{0,1,...,F-1\} \; ,
    \end{equation}
    where $B_n$ is the $n^{th}$ byte of the section, and $\alpha_0=\sqrt{1/\ell}$, $\alpha_k=\sqrt{2/\ell}\;(k>0)$ ortho-normalize the transform. The compressed section $S = [S_0,...,S_{F-1}]$ has a fixed dimensionality $F$, which keeps the model input size, and consequently $\text{App}_{SA}$'s memory and compute overhead, constant regardless of the SRAM size. Moreover, a localized firmware modification is broadband in the frequency domain, spreading its energy across many coefficients $S_k$, and thus remains detectable despite the compression (Section \ref{sec:results}). Each coefficient is standardized using the per-feature mean and standard deviation of the training data. The compressed, standardized vectors of both sections constitute $D$, which is randomly divided into $D_{train}$ and $D_{val}$ with a 2:1 ratio. Scaled uniform noise $\epsilon\sim\mathcal{U}(0,1) \in \mathbb{R}^{n\times F}$ is added to $D_{train}$ as follows:
    \begin{equation}
    \tilde{D}_{train} = D_{train} + n_f\cdot\epsilon \; ,
    \end{equation}
    where $n_f$ is the noise scaling factor. Uniform noise helps prevent overfitting through denoising-based training of the TinyML AEs.
    \item Initialize two AEs, $M^A$ and $M^B$, for the $.data/.bss$ and $stack/heap$ sections, respectively, and optimize over the task $\hat{S}=M(\tilde{S}), \forall \tilde{S},S \in \tilde{D}_{train},D_{train}$ such that the mean squared error, $MSE(\hat{S},S)$, is minimized. Convert each $M$ to a quantized TinyML model ($M_{lite}$) using the appropriate conversion library.
    \item Adaptively select the target True Negative Rate ($TNR_{target}$) over $D_{val}$ based on the distribution of reconstruction MSE obtained on $D_{val}$. This is done to balance True Positive Rate ($TPR_{test}$) and $TNR_{test}$ over a test set with \textit{prior knowledge of only normal SRAM patterns}. The gap between the $95^{th}$ and $99^{th}$ percentiles with reference to the $95^{th}$ percentile quantifies the spread of validation errors as follows:
    \begin{equation}
        \Gamma = \frac{P_{99} - P_{95}}{P_{95}} \; ,
    \end{equation}
    where $P_{95}$ and $P_{99}$ denote the $95^{th}$ and $99^{th}$ percentiles of validation error, respectively. We use $P_{95}$ as a conservative lower bound (tighter threshold) of $TNR_{val}$ to filter outliers and ensure potentially high $TPR_{test}$ on the unknown malicious test cases. Conversely, $P_{99}$ is the relaxed upper bound of $TNR_{target}$. $TNR_{target}$ is determined adaptively using a piecewise function as follows:
    \begin{equation}
    \label{eq:adaptive}
    \text{TNR}_{\text{target}} = 
    \begin{cases}
    0.99 & \text{if } \Gamma < 0.2 \\
    0.97 & \text{if } 0.2 \leq \Gamma < 0.5 \\
    0.95 & \text{if } \Gamma \geq 0.5
    \end{cases} \; ,
    \end{equation}
    where a small $\Gamma$ ($< 0.2$) indicates tightly clustered validation errors, suggesting scope to increase $TNR_{test}$ with minimal change to the threshold. Conversely, a large gap ratio ($\geq 0.5$) indicates dispersed errors, suggesting a more conservative $TNR_{val}$ (95\%) to potentially improve $TPR_{test}$. Discrete TNR targets and gap limits are selected heuristically for best performance on all datasets. Given the $TNR_{target}$, we employ binary search to determine the optimal threshold MSE ($\mathcal{T}_{opt}$) such that:
    \begin{equation}
    \label{eq:binary}
    \left| \text{TNR}_{\text{val}}(\mathcal{T}_{opt}) - \text{TNR}_{\text{target}} \right| < 0.005 \; ,
    \end{equation}
    where $\text{TNR}_{\text{val}}(\mathcal{T}_{opt})$ is the empirical $TNR$ achieved on $D_{val}$ using threshold $\mathcal{T}_{opt}$. This procedure is applied independently to each section model. Since the two models are combined disjunctively during attestation (Algorithm \ref{algo:generate}), each is calibrated to a target of $\sqrt{TNR_{target}}$ so that the combined attestation specificity meets $TNR_{target}$.
    \item Evaluate the trained section models and their thresholds on benign and adversarial samples to ensure adversarial resilience (see Section \ref{sec:setup}) and low False Positive Rate (FPR). If cleared, the detection parameters are integrated into $\text{App}^*_{SA,j}$ and a $patch=[\mathcal{F}^*_j, \text{App}^*_{SA}]$ is created for Over-the-Air (OTA) delivery.
    \item Provision $ID_j$ with (i) the $\mathcal{F}^*_j$ and $\text{App}^*_{SA,j}$ binaries, (ii) a long-term key-pair $(SK_j, PK_j)$ generated within the TEE, (iii) a certificate $\text{Cert}_j$ binding $PK_j$ to $ID_j$ and signed by the vendor's CA, and (iv) the CA's public key for verifying certificates of peer devices. $SK_j$, the section models ($M^A_{lite}$, $M^B_{lite}$), and their thresholds ($\mathcal{T}^A_{opt}$, $\mathcal{T}^B_{opt}$) never leave the TEE. All updates are delivered to the user device via OTA updates.
\end{enumerate} 

\begin{algorithm}[t]
\caption{Report generation algorithm: $\text{App}_{SA}\text{.Generate}(ctx_{self})$.}
\label{algo:generate}

\textbf{Inputs:} Session transcript context $ctx$.

\textbf{Output:} SA report $\mathcal{R}_{self}$.

$S^A \leftarrow \textsc{DCT}(B_{0:\lambda})$ \Comment{DCT-compressed .data/.bss}

$\gamma \leftarrow \mathds{1}[MSE(M^A_{lite}(S^A), S^A) > \mathcal{T}^A_{opt}]$ \Comment{Attest data section}

\If{$\gamma==0$}{

    $S^B \leftarrow \textsc{DCT}(B_{\lambda:L})$ \Comment{DCT-compressed $stack/heap$}
    
    $\gamma \leftarrow \mathds{1}[MSE(M^B_{lite}(S^B), S^B) > \mathcal{T}^B_{opt}]$ \Comment{Attest $stack/heap$}
}

$t \leftarrow time()$

$N \leftarrow PRNG()$

$\mathcal{R}_{self} \leftarrow (ID_{self}, \gamma, t, N, ctx_{self})$

return $\mathcal{R}_{self}$

\end{algorithm}

\begin{algorithm}[t]
\caption{Report validation algorithm: $\text{App}_{SA}\text{.Validate}(\mathcal{R}_s, ID_{expected}, ctx_{expected})$.}
\label{algo:validate}

\textbf{Inputs:} Received report $\mathcal{R}_s$, expected sender ID $ID_{expected}$, session transcript context $ctx_{expected}$.

\textbf{Output:} validation outcome $\delta_s \in \{-2, -1, 0, 1\}$

$(ID_s, \gamma_s, t_s, N_s, ctx_s) \leftarrow \mathcal{R}_s$ \Comment{Parse received report}

\If{$ID_s \neq ID_{expected} \: \lor \: ctx_s \neq ctx_{expected}$}{
    return $-2$ \Comment{Identity or session mismatch}
}

\If{$time() - t_s > \epsilon$}{
    return $-1$ \Comment{Expired report}
}

\If{$\gamma_s == 1$}{
    return $1$ \Comment{Sender reports unsafe firmware}
}

return $0$ \Comment{Safe sender}

\end{algorithm}

\begin{figure}
    \centering
    \includegraphics[width=\linewidth]{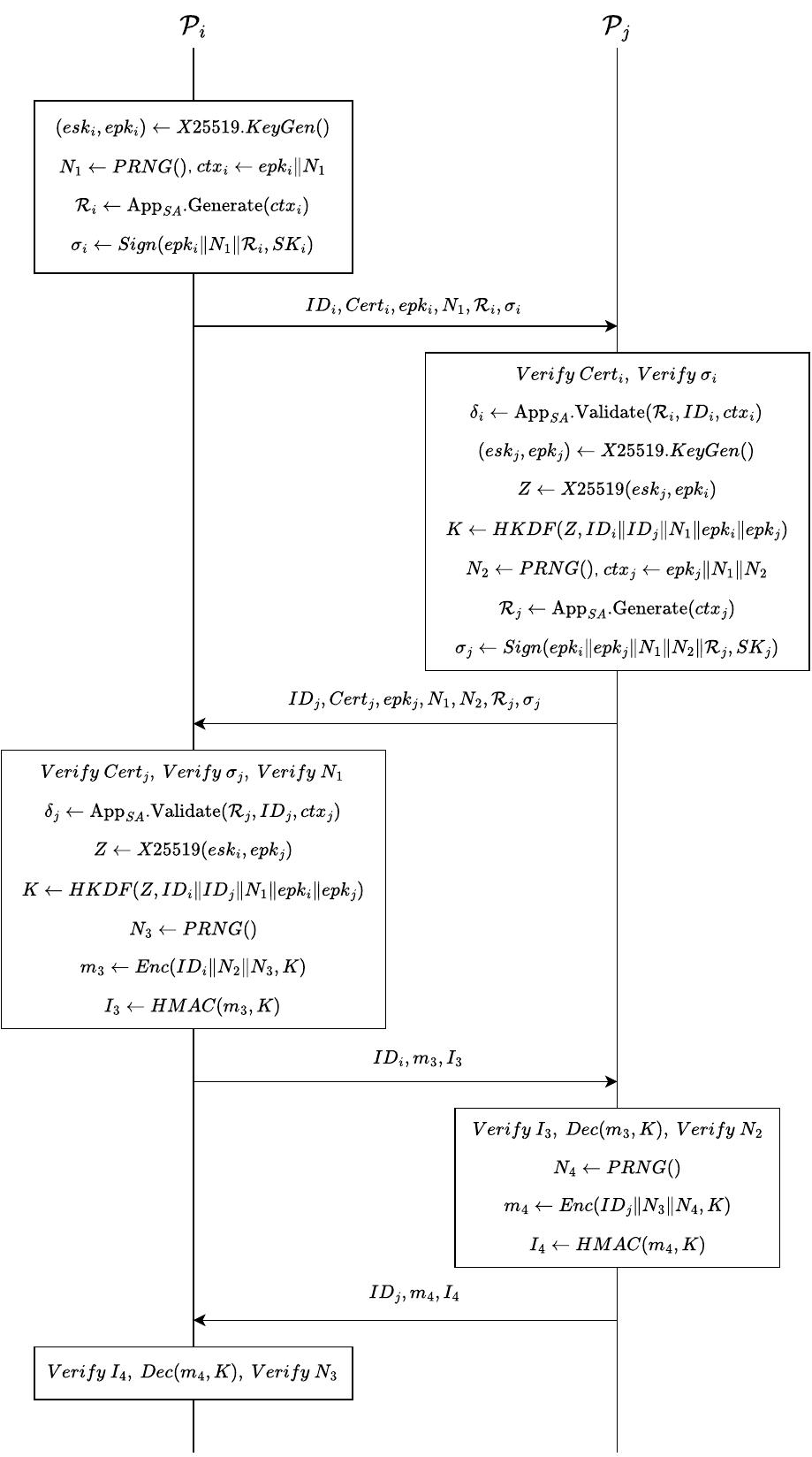}
    \caption{\textit{LiteAtt} protocol for mutual authentication and SA.}
    \label{fig:self-att-protocol}
\end{figure}

\subsection{Attestation Phase}
The attestation phase occurs during the routine operation of the MCUs in $\mathcal{C}_{IoT}$. Device $ID_i$ in the network may attempt to connect with $ID_j$ to share data or control signals. To do so, $ID_i$ and $ID_j$ participate in a four-way handshake protocol, invoking their respective $\text{App}_{SA}$ to prove their identities and firmware integrity before any P2P communication. Devices use X25519 for ephemeral Diffie-Hellman key agreement, Ed25519 for identity signatures ($Sign$, $Verify$), and HMAC-based Key Derivation Function ($HKDF$) for session-key derivation. Once a session key has been established, devices use Advanced Encryption Standard Cipher Block Chaining (AES-CBC) for encryption ($Enc$) and decryption ($Dec$), and Secure Hash Algorithm (SHA-256) for Hash-based Message Authentication Code ($HMAC$). Pseudo-random nonces are generated for freshness using a Pseudo-random Number Generator ($PRNG$), and a synchronized monotonic clock ($time$) is used for timestamps.

\subsubsection{$\text{App}_{SA}$:} $\text{App}_{SA}$ provides two TEE-resident routines, $\text{App}_{SA}.\texttt{Generate}$, and $\text{App}_{SA}.\texttt{Validate}$ for report generation and validation, respectively. Both routines operate entirely within the TEE and never expose intermediate state to the unsecure world. Please note that cryptographic operations are handled by the protocol.

$\text{App}_{SA}.\texttt{Generate}(ctx_{self})$ samples and DCT-compresses the $.data/.bss$ and $stack/heap$ sections from the SRAM, and thresholds the reconstruction error of the corresponding section models ($M^A_{lite}$, $M^B_{lite}$) against $\mathcal{T}^A_{opt}$ and $\mathcal{T}^B_{opt}$. The SA outcome $\gamma$ is their disjunction, with the $stack/heap$ model evaluated only when the data section is clean. It then assembles a report $\mathcal{R}_{self}=[ID_{self},\gamma,t,N,ctx_{self}]$.

$\text{App}_{SA}.\texttt{Validate}(\mathcal{R}_s, ID_{expected}, ctx_{expected})$, on the other hand, parses a received report ($\mathcal{R}_s$), and returns one of four outcomes: $-2$ if either $ID$ or $ctx$ in the payload does not match the expected values, $-1$ if the report is stale ($time() - t > \epsilon$), $1$ if the sender's $\gamma_s = 1$, and $0$ otherwise. The context checks bind the report to the specific session, defeating replay across sessions even within the freshness window. $\epsilon = 1$s is chosen to comfortably cover worst-case report generation, signing, transmission, and verification latency across the target boards while bounding the harvested-replay window.

\subsubsection{P2P-SA Protocol:} Fig. \ref{fig:self-att-protocol} depicts the proposed mutual authentication, key agreement, and attestation protocol between $\mathcal{P}_i$ and $\mathcal{P}_j$. Each device is provisioned with a long-term identity key-pair $(SK, PK)$, an associated certificate $\text{Cert}$ signed by the IoT vendor's Certification Authority (CA), and the CA's public key for verifying peer certificates. One successful run of the protocol comprises the following five steps:

\begin{enumerate}
    \item[\textbf{Step 1.}] $\mathcal{P}_i$ generates a fresh nonce $N_1 \leftarrow PRNG()$ and an ephemeral X25519 key-pair $(esk_i, epk_i)$. It computes the session context $ctx_i = epk_i \| N_1$, invokes $\mathcal{R}_i \leftarrow \text{App}_{SA}.\texttt{Generate}(ctx_i)$, and signs the handshake transcript as $\sigma_i \leftarrow \text{Sign}(epk_i \| N_1 \| \mathcal{R}_i, SK_i)$. It then sends $(ID_i, \text{Cert}_i, epk_i, N_1, \mathcal{R}_i, \sigma_i)$ to $\mathcal{P}_j$.

    \item[\textbf{Step 2.}] $\mathcal{P}_j$ verifies $\text{Cert}_i$ against the CA's public key and $\sigma_i$ using $PK_i$ extracted from $\text{Cert}_i$. It invokes $\delta_i \leftarrow \text{App}_{SA}.\texttt{Validate}(\mathcal{R}_i, ID_i, ctx_i)$, and proceeds only if $\delta_i = 0$. $\mathcal{P}_j$ then generates its own ephemeral key-pair $(esk_j, epk_j)$, computes the shared secret $Z = X25519(esk_j, epk_i)$, derives the session key $K \leftarrow \text{HKDF}(Z, ID_i \| ID_j \| N_1 \| epk_i \| epk_j)$, and generates $N_2 \leftarrow PRNG()$. It then computes $ctx_j = epk_j \| N_1 \| N_2$, invokes $\mathcal{R}_j \leftarrow \text{App}_{SA}.\texttt{Generate}(ctx_j)$, signs the transcript $\sigma_j \leftarrow \text{Sign}(epk_i \| epk_j \| N_1 \| N_2 \| \mathcal{R}_j, SK_j)$, and sends $(ID_j, \text{Cert}_j, epk_j, N_1, N_2, \mathcal{R}_j, \sigma_j)$ to $\mathcal{P}_i$. The inclusion of $N_1$ in the signed transcript binds the response to $\mathcal{P}_i$'s initial request.

    \item[\textbf{Step 3.}] $\mathcal{P}_i$ verifies $\text{Cert}_j$ and $\sigma_j$, and confirms that the returned $N_1$ matches its own. It computes $Z = X25519(esk_i, epk_j)$ and derives the matching session key $K \leftarrow \text{HKDF}(Z, ID_i \| ID_j \| N_1 \| epk_i \| epk_j)$. It invokes $\delta_j \leftarrow \text{App}_{SA}.\texttt{Validate}(\mathcal{R}_j, ID_j, ctx_j)$, and proceeds only if $\delta_j = 0$. $\mathcal{P}_i$ then generates $N_3 \leftarrow PRNG()$, constructs $m_3 \leftarrow Enc(ID_i \| N_2 \| N_3, K)$ and $I_3 \leftarrow HMAC(m_3, K)$ under the newly derived session key, and sends $(ID_i, m_3, I_3)$ to $\mathcal{P}_j$. This message serves as key confirmation since its successful decryption by $\mathcal{P}_j$ proves that $\mathcal{P}_i$ derived the correct session key, and the included $N_2$ acknowledges $\mathcal{P}_j$'s freshness.

    \item[\textbf{Step 4.}] $\mathcal{P}_j$ verifies $I_3$, decrypts $m_3$ using $K$, and confirms that the returned $N_2$ matches its own. This implicitly authenticates $\mathcal{P}_i$'s knowledge of $Z$ and completes mutual key confirmation. $\mathcal{P}_j$ then generates $N_4 \leftarrow PRNG()$, constructs $m_4 \leftarrow Enc(ID_j \| N_3 \| N_4, K)$ and $I_4 \leftarrow HMAC(m_4, K)$, and sends $(ID_j, m_4, I_4)$ to $\mathcal{P}_i$.

    \item[\textbf{Step 5.}] As the final step, $\mathcal{P}_i$ verifies $I_4$, decrypts $m_4$, and confirms that the returned $N_3$ matches its own, completing the protocol. Both parties may now use $K$ for the remainder of the session and discard the ephemeral keys $esk_i, esk_j$.
\end{enumerate}

A thorough formal security evaluation is presented in Section~\ref{sec:security}. In summary, the protocol guarantees cryptographically strong mutual authentication through certificate-bound identity signatures, alongside post-handshake message confidentiality and integrity driven by the derived session key $K$. Because $K$ is computed exclusively from ephemeral X25519 key-pairs that are immediately destroyed upon session termination, the architecture inherently ensures perfect forward secrecy. While $\mathcal{R}_i, \mathcal{R}_j$ are transmitted unencrypted, their authenticity and immutable session-binding are strictly enforced by $\sigma$ and $ctx$ checks executed inside the TEE. We explicitly do not provide confidentiality for $\gamma$ since the binary attestation outcome is already inferable from handshake completion. Finally, defense against replay is maintained globally via the integration of fresh 128-bit pseudo-random nonces across all four handshake segments, whereas $\epsilon$-comparison prevents the injection of stale or harvested attestation reports.

\begin{figure}
    \centering
    \includegraphics[width=0.8\linewidth]{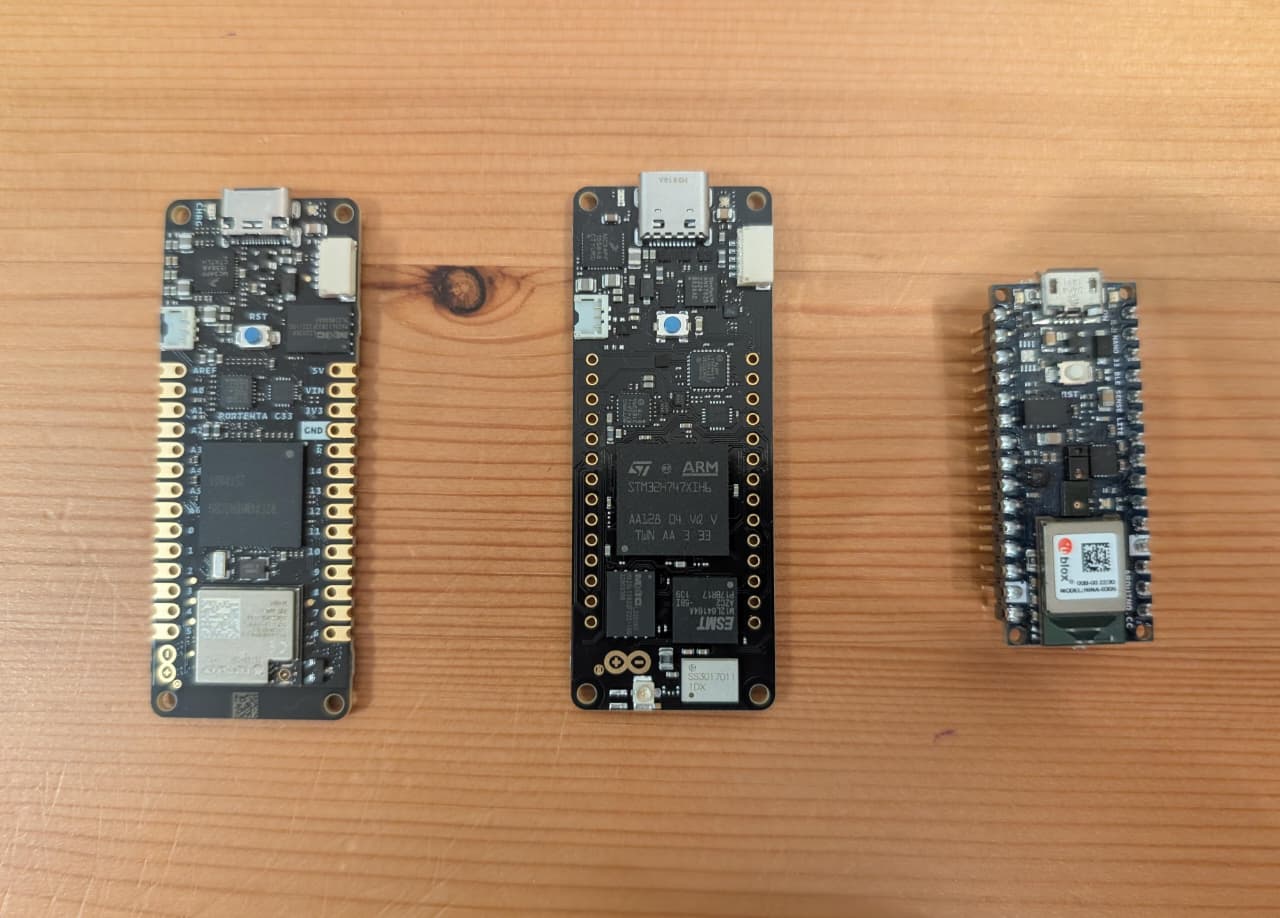}
    \caption{(Left to right) Arduino Portenta C33, Portenta H7, and Nano 33 BLE Sense MCU boards used to evaluate \textit{LiteAtt}.}
    \label{fig:hardware}
\end{figure}

\begin{table}[t]
\centering
\caption{Hardware specifications of the Arm Cortex-M boards used in this paper.}
\label{tab:hardware}
\renewcommand{\arraystretch}{1.2}
\resizebox{\columnwidth}{!}{
\begin{tabular}{|ll|l|l|l|}
\hline
\multicolumn{2}{|l|}{\textbf{Property}}                                             & \textbf{Portenta C33} & \textbf{Portenta H7}                                                          & \textbf{Nano 33 BLE Sense} \\ \hline
\multicolumn{2}{|l|}{Processor}                                                         & Cortex-M33    & Cortex-M7, -M4 & Cortex-M4F          \\ \hline
\multicolumn{2}{|l|}{Clock speed} & 200 MHz & 480, 240 MHz & 64 MHz \\ \hline
\multicolumn{2}{|l|}{Core voltage} & 1.1 V  & 1.2 V & 1.8 V \\ \hline
\multicolumn{2}{|l|}{Active current} & 30 mA  & 280 mA & 5 mA \\ \hline
\multicolumn{2}{|l|}{Active power} & 33 mW  & 336 mW & 9 mW \\ \hline
\multicolumn{1}{|l|}{\multirow{2}{*}{RAM}}                    & SRAM                    & 512KB                & 1MB                                                                          & 256KB                     \\ \cline{2-5} 
\multicolumn{1}{|l|}{}                                        & DRAM                    & -                     & 8MB                                                                          & -                          \\ \hline
\multicolumn{2}{|l|}{Flash}                                                             & 18 (2+16)MB          & 18 (2+16)MB                                                                  & 1MB                       \\ \hline
\multicolumn{2}{|l|}{TinyML}                                                            & Yes                   & Yes                                                                           & Yes                        \\ \hline
\multicolumn{2}{|l|}{TEE}                                                               & Yes                   & No (dual core)                                                                & No                         \\ \hline
\multicolumn{2}{|l|}{\begin{tabular}[c]{@{}l@{}}Cryptographic\\ primitive\end{tabular}} & SE050C2               & \begin{tabular}[c]{@{}l@{}}SE050C2\\ ATECC608\end{tabular}                    & ATECC608A                  \\ \hline
\end{tabular}}
\end{table}

\section{Experimental Setup}
This section presents the experimental testbed, including hardware, software, dataset, attacks, TinyML hyperparameters, and evaluation metrics.

\label{sec:setup}
\subsection{Hardware and Software}
\subsubsection{Hardware} We use three cost-effective, TinyML-compatible Arm Cortex-M Arduino boards with a range of capabilities to evaluate the efficacy of \textit{LiteAtt}. These include the TEE-enabled Arduino Portenta C33 \cite{portena-c33}, the dual-core Arduino Portenta H7 \cite{portena-h7}, and the cheapest among the three, the Arduino Nano 33 BLE Sense \cite{nano-33}. Fig. \ref{fig:hardware} shows the physical devices while Table \ref{tab:hardware} compiles the corresponding hardware specifications. The Arduino Portenta C33 is equipped with the 200 MHz Arm Cortex-M33 core, 2MB flash + 16MB external flash, 512KB SRAM, the SE050C2 secure element, and features an Arm Trustzone TEE, making it our main test bed. The Arduino Portenta H7 is equipped with the 480 MHz Arm Cortex-M7 and 240 MHz Cortex-M4 dual cores, 2MB flash + 16MB external flash, and 1MB SRAM + 8MB external SDRAM. The Arduino Nano 33 BLE is equipped with the 64 MHz Arm Cortex-M4F core, 1MB flash, 256KB SRAM, and the ATECC608A cryptoprocessor.

\subsubsection{Software} TinyML models are designed using Tensorflow 2.17 and Python 3.11. The models are quantized to 8-bit integer parameters using TFLite Micro. The protocol's cryptographic operations and $\text{App}_{SA}$ are created in Renesas $e^2$ Studio for Arm TrustZone on the Arduino Portenta C33, and Arduino IDE for Arduino Portenta H7 and Nano 33 BLE Sense. Software-based cryptographic primitives were used for consistency across the boards. 

\begin{table}[t]
\centering
\caption{Datasets. The S/M column gives the number of \textit{safe} (S) and \textit{unsafe} (U) variants of each firmware.}
\label{tab:datasets}
\renewcommand{\arraystretch}{1.2}
\resizebox{\columnwidth}{!}{
\begin{tabular}{|ccll|c|}
\hline
\multicolumn{1}{|c|}{\textbf{Dataset}} & \multicolumn{1}{l|}{\textbf{Type}} & \multicolumn{1}{l|}{\textbf{Firmware}} & \textbf{Brief description} & \textbf{S/U} \\ \hline
\multicolumn{1}{|c|}{\multirow{8}{*}{1 \cite{0nze-r023-24}}} & \multicolumn{1}{l|}{\multirow{8}{*}{Single}} & \multicolumn{1}{l|}{AES128} & Performs looped encryption, decryption & 1/3 \\ \cline{3-5}
\multicolumn{1}{|c|}{} & \multicolumn{1}{l|}{} & \multicolumn{1}{l|}{Interrupt} & Push-button interrupt & 1/3 \\ \cline{3-5}
\multicolumn{1}{|c|}{} & \multicolumn{1}{l|}{} & \multicolumn{1}{l|}{LED} & Control LED using analog pin readings & 1/3 \\ \cline{3-5}
\multicolumn{1}{|c|}{} & \multicolumn{1}{l|}{} & \multicolumn{1}{l|}{Random} & Generates pseudo-random numbers & 1/3 \\ \cline{3-5}
\multicolumn{1}{|c|}{} & \multicolumn{1}{l|}{} & \multicolumn{1}{l|}{Shake} & Detects lateral movement & 1/3 \\ \cline{3-5}
\multicolumn{1}{|c|}{} & \multicolumn{1}{l|}{} & \multicolumn{1}{l|}{Temperature} & Reads surrounding temperature & 1/3 \\ \cline{3-5}
\multicolumn{1}{|c|}{} & \multicolumn{1}{l|}{} & \multicolumn{1}{l|}{Vibration} & Detects vibration, controls LED & 1/3 \\ \cline{3-5}
\multicolumn{1}{|c|}{} & \multicolumn{1}{l|}{} & \multicolumn{1}{l|}{XTS} & AES-XTS block cipher & 1/3 \\ \hline
\multicolumn{1}{|c|}{\multirow{4}{*}{2 \cite{gmee-vj41-24}}} & \multicolumn{1}{l|}{\multirow{4}{*}{Swarm}} & \multicolumn{1}{l|}{N0\_4} & Master node of a four-node swarm & 1/1 \\ \cline{3-5}
\multicolumn{1}{|c|}{} & \multicolumn{1}{l|}{} & \multicolumn{1}{l|}{N1\_4} & Senses 24 bytes of data for processor node & 1/1 \\ \cline{3-5}
\multicolumn{1}{|c|}{} & \multicolumn{1}{l|}{} & \multicolumn{1}{l|}{N2\_4} & Processes data, generates a control signal & 1/2 \\ \cline{3-5}
\multicolumn{1}{|c|}{} & \multicolumn{1}{l|}{} & \multicolumn{1}{l|}{N3\_4} & Control peripherals using the control signal & 1/3 \\ \hline
\multicolumn{1}{|c|}{\multirow{6}{*}{3 \cite{gmee-vj41-24}}} & \multicolumn{1}{l|}{\multirow{6}{*}{Swarm}} & \multicolumn{1}{l|}{N0\_6} & Master node of a six-node swarm & 1/1 \\ \cline{3-5}
\multicolumn{1}{|c|}{} & \multicolumn{1}{l|}{} & \multicolumn{1}{l|}{N1\_6} & Senses 16 bytes of data for processor node & 1/1 \\ \cline{3-5}
\multicolumn{1}{|c|}{} & \multicolumn{1}{l|}{} & \multicolumn{1}{l|}{N2\_6} & Processes data, generates a control signal & 1/2 \\ \cline{3-5}
\multicolumn{1}{|c|}{} & \multicolumn{1}{l|}{} & \multicolumn{1}{l|}{N3\_6} & Controls peripheral using the control signal & 1/3 \\ \cline{3-5}
\multicolumn{1}{|c|}{} & \multicolumn{1}{l|}{} & \multicolumn{1}{l|}{N4\_6} & Senses 12 bytes of data for processor node & 1/1 \\ \cline{3-5}
\multicolumn{1}{|c|}{} & \multicolumn{1}{l|}{} & \multicolumn{1}{l|}{N5\_6} & Processes data, controls peripherals & 1/2 \\ \hline
\multicolumn{1}{|c|}{\multirow{5}{*}{4}} & \multicolumn{1}{l|}{\multirow{5}{*}{Single}} & \multicolumn{1}{l|}{Environment} & Temperature, humidity, light monitoring & 1/- \\ \cline{3-5}
\multicolumn{1}{|c|}{} & \multicolumn{1}{l|}{} & \multicolumn{1}{l|}{Network} & Packet routing, checksum verification & 1/- \\ \cline{3-5}
\multicolumn{1}{|c|}{} & \multicolumn{1}{l|}{} & \multicolumn{1}{l|}{Motor} & Dual motor control with safety monitoring & 1/- \\ \cline{3-5}
\multicolumn{1}{|c|}{} & \multicolumn{1}{l|}{} & \multicolumn{1}{l|}{Security} & Token-based HMAC, authentication & 1/- \\ \cline{3-5}
\multicolumn{1}{|c|}{} & \multicolumn{1}{l|}{} & \multicolumn{1}{l|}{Data} & Statistical analytics of multi-sensor data & 1/- \\ \hline
\multicolumn{4}{|l|}{\textbf{Total}} & \textbf{23/41} \\ \hline
\end{tabular}}
\end{table}

\subsection{Firmware, Datasets and Attacks}
Table \ref{tab:datasets} compiles the firmware samples used in this paper. Dataset 1 comprises SRAM dumps from 8 firmwares, including cryptographic, sensing, and control applications, and \textit{24 modifications to data and functional dependencies}. Datasets 2 and 3 SRAM dumps from swarm deployments, including a variety of sense, process, and peripheral control tasks with three to twenty-four bytes of collected, processed, and communicated data. This dataset comprises of \textit{10 modified samples with data and functional modifications,} and \textit{7 data injection cases arising from modified firmware in swarm settings.}. Dataset 4 is collected to further aid the study with more complex sensing, control, and analytics applications.

All SRAM traces comprise unsigned integer byte values in the range 0-255, partitioned into $.data/.bss$ and $stack/heap$ sections that are each compressed to $F$ DCT coefficients via Equation (\ref{eq:dct}). The \textit{safe} samples of every $\mathcal{F}_i$ follow a (50\%, 25\%, 25\%) split for training, validation, and testing. The validation set is used to select the adaptive detection threshold, $\mathcal{T}_{opt}$. Only \textit{safe} samples are used to train the $M_{lite}$, consistent with the anomaly-detection formulation.

\textit{\textbf{Test-time scenarios.}} We evaluate the detector against five classes of unsafe SRAM state, listed below. The first two are drawn from the collected datasets, while the last three are synthetically simulated on held-out safe traces to characterize $M_{lite}$ under controlled perturbation.
\begin{itemize}
    \item \textbf{\textit{Unsafe:}} For firmware $\mathcal{F}_i$ with adversarial variants in Datasets 1-3, the malicious samples of $\mathcal{F}_i$ (with changes to data and function dependencies, and data injection attacks) constitute its \textit{unsafe} set. Discriminating between \textit{safe} and \textit{unsafe} categories is a challenging task since modifications and data injection cause a minor change in the SRAM footprint of the respective firmware. Each \textit{safe} firmware has upto 3 \textit{unsafe} firmware counterparts, as seen in Table \ref{tab:datasets}.
    \item \textbf{\textit{Other:}} The cross-firmware unsafe set of SRAM samples from every $\mathcal{F}_{\backslash i}$ in the dataset to evaluate discrimination between different firmware. Each \textit{safe} firmware has 22 \textit{other} counterparts.
    \item \textbf{\textit{DOP:}} Targeted variable bytes within the $.data/.bss$ section are synthetically corrupted across a controlled byte extent. This simulates DOP runtime modifications that bypass static binary code-integrity checks by altering only runtime data. 
    \item \textbf{\textit{ROP/JOP:}} A contiguous chain of valid 32-bit flash memory code-addresses (representing gadget sequences) is injected into aligned boundaries of the $stack/heap$ section. JOP is specifically modeled by additionally inserting a recurring dispatcher address profile. These simulate the memory footprint of active execution-hijacking payloads that traditional CFA schemes \cite{ammar2025sok} target through heavy hardware instrumentation.
    \item \textbf{\textit{Random-perturbation sensitivity:}} A random perturbation adversarial analysis instantiating a weak adversary $\mathcal{A}_{obliv}$ adversary aiming to fool $M_{lite}$. Safe traces are corrupted at random by sampling $\eta\%$ of bytes uniformly in each SRAM section and replacing them with uniformly random byte values, sweeping $\eta\in\{0.1, 0.25, 0.5, 1, 2, 5, 10, 25, 50\}\%$. This is the structurally weakest perturbation an adversary with a $\eta$-byte budget can apply.
    \item \textbf{\textit{Adaptive attack:}} A decision-based black-box adversarial-ML analysis instantiating $\mathcal{A}_{adapt}$ with access to two authorized devices under its control (one of which hosts the target $M_{lite}$) with a goal to make adaptive perturbations using $\mathcal{F}_m$ and that fool $M_{lite}$. We assume an offline twin-query budget $Q{=}\infty$ between the captured devices, and sweep across perturbations $\eta\in\{1,2,5,10\}\%$ for three adaptive tiers: \textit{(i)}~$\mathcal{A}_{adapt}^{none}$ does not have access to benign SRAM dumps and samples $Q=5000$ random $L_0$ perturbations of the malicious trace, querying the twin-oracle, observing $\gamma$ feedback, and replaying the highest-scoring candidates against the target model. \textit{(ii)}~$\mathcal{A}_{adapt}^{xfw}$ holds benign SRAM from \emph{other} firmware, trains a generator model ($\widehat{M}$) on about 1,700 cross-firmware traces, and applies candidate perturbations onto each malicious trace to fool the target $M_{lite}$. \textit{(iii)}~$\mathcal{A}_{adapt}^{1}$ knows exactly one leaked benign trace $x_b$ of the target firmware. It samples $Q$ candidates by noisy top-$\eta\%$ disagreement substitution from $x_b$, queries the twin to label them, trains $\widehat{M}$ on the twin-confirmed positives, and submits the highest-ranked candidates against the target $M_{lite}$. This is the strongest adversarial attack among the three but assumes access to benign SRAM. Each tier yields the empirical bound $\text{FNR}_{\text{adapt}}^{tier}(\eta)$ used in Section \ref{sec:security}.
\end{itemize}

Each TinyAE is trained on $\sim$900 \textit{safe} samples only, and evaluated using an average 450 \textit{safe}, 60,000 \textit{unsafe}/\textit{other}, and 155,000 DOP/ROP/JOP samples.

\subsection{ML and Hyperparameters}
We consider three simple TinyAE architectures for attestation and one MLP-VAE architecture for adversarial attacks:
\begin{enumerate}
    \item $M_1$: A two-layer Multi-layer Perceptron (MLP)-AE architecture with $l$ input features, eight hidden neurons followed by a 0.2 dropout, and $l$-neuron linear output.
    \item $M_2$: A three-layer MLP-AE architecture with $l$-input features, two hidden layers of eight hidden neurons each, followed by a 0.2 dropout, and a $l$-neuron linear output.
    \item $M_3$: A four-layer Convolutional Neural Network (CNN)-AE architecture with $l$-input features, two convolutional encoding layers with sixteen and eight filters of three dimensions, each followed by a 2-dimensional maxpooling layer, an eight-neuron hidden layer followed by a dropout, and a $l$-neuron linear output.
    \item $\widehat{M}$: A 4-layer MLP-VAE with $L$ inputs, 256 dimension encoder and decoder, and 16$\times$2 dimension bottleneck.
\end{enumerate}

All layers except the output are activated using Rectified Linear Unit (ReLU) activation. $M_1-M_3$ are trained using the Adam optimizer with a batch size of 64 and a learning rate of 0.005 for 100 epochs, while $\widehat{M}$ is trained using the Adapm optimizer with a batch size of 32, 0.001 learning rate and KL weight, and 40 epochs.

\subsection{Evaluation Metrics}
\textit{LiteAtt}'s predictive performance is evaluated using standard ML metrics including Accuracy, Precision, TPR, TNR, FPR, False Negative Rate (FNR), and F1-score. Discriminative ability is also evaluated using the Receiver Operating Characteristic-Area Under the Curve (ROC-AUC). Further, runtime performance of the protocol includes peak memory overhead (in KB), attestation latency (in ms), and energy consumption (in $\mu$J).

\begin{table}[t]
\centering
\caption{Comparison of memory overheads and key predictive performance statistics on the Arduino Portenta C33 for model selection.}
\label{tab:models}
\renewcommand{\arraystretch}{1.2}
\resizebox{\columnwidth}{!}{
\begin{tabular}{|l|l|l|l|}
\hline
\multicolumn{1}{|c|}{\multirow{2}{*}{\textbf{Property}}} & \multicolumn{1}{c|}{$M_1$} & \multicolumn{1}{c|}{$M_2$} & \multicolumn{1}{c|}{$M_3$} \\ \cline{2-4} 
\multicolumn{1}{|c|}{}                                   & \textit{2-layer MLP-AE}          & \textit{3-layer MLP-AE}          & \textit{4-layer CNN-AE}          \\ \hline
Keras size  & 37.64KB          & 43.10KB          & 68.49KB          \\ \hline
TFlite size & 4.91KB           & 5.66KB           & 11.89KB          \\ \hline
Reduction    & $7.66\times$     & $7.61\times$     & $5.76\times$     \\ \hline
Tensor arena & 1.91KB           & 2.02KB           & 8.80KB           \\ \hline
Accuracy          & 0.9942 ± 0.0122  & 0.9940 ± 0.0122  & 0.9941 ± 0.0121  \\ \hline
TPR               & 0.9945 ± 0.0122  & 0.9943 ± 0.0123  & 0.9944 ± 0.0122  \\ \hline
TNR               & 0.9514 ± 0.0223  & 0.9489 ± 0.0212  & 0.9479 ± 0.0216  \\ \hline
\end{tabular}}
\end{table}

\begin{figure}[t]
    \centering
    \includegraphics[width=\linewidth]{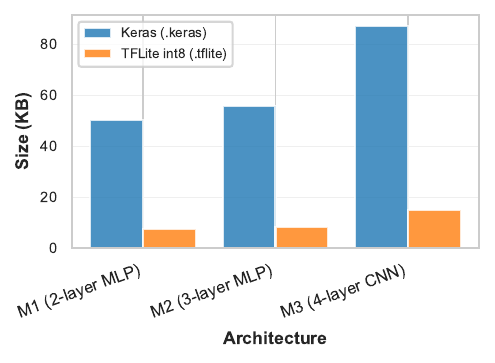}
    \caption{Keras vs TFLite model size comparison.}
    \label{fig:size}
\end{figure}

\begin{figure}[t]
    \centering
    \includegraphics[width=\linewidth]{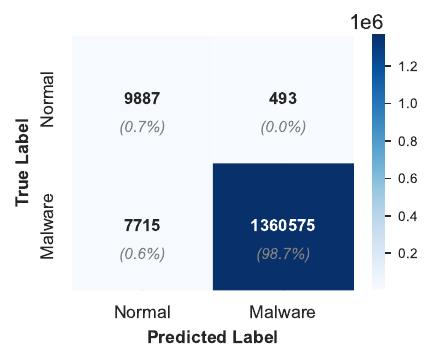}
    \caption{Combined confusion matrix across all firmware, detectors, and test samples.}
    \label{fig:confusion}
\end{figure}

\begin{figure}[t]
    \centering
    \includegraphics[width=\linewidth]{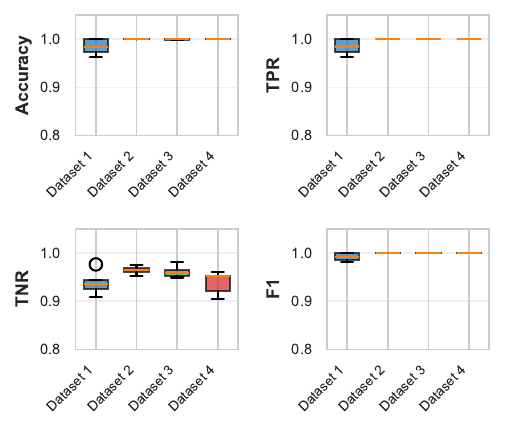}
    \caption{Comparison of key metrics across datasets.}
    \label{fig:datasets}
\end{figure}

\begin{figure}[t]
    \centering
    \includegraphics[width=\linewidth]{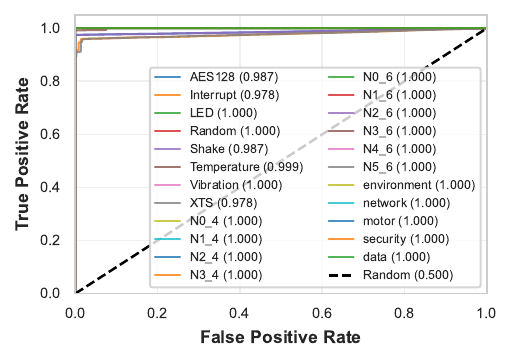}
    \caption{ROC-AUC for all firmware.}
    \label{fig:rocauc}
\end{figure}

\begin{table*}[t]
\centering
\caption{Performance metrics using $M_1$ for all firmware.}
\label{tab:prediction}
\renewcommand{\arraystretch}{1.2}
\resizebox{0.9\textwidth}{!}{
\begin{tabular}{|c|c|c|c|c|c|c|c|c|c|}
\hline
\textbf{Firmware} & $TNR_{target}$ & \textbf{A} & \textbf{P} & \textbf{Overall TPR} & \textbf{Unsafe TPR} & \textbf{Other TPR} & \textbf{FPR} & \textbf{FNR} & \textbf{F1} \\ \hline
AES128      & 0.9500 & 0.9760 & 0.9996 & 0.9763 & 1.0000 & 0.9756 & 0.0587 & 0.0237 & 0.9878 \\ \hline
Interrupt   & 0.9500 & 0.9621 & 0.9995 & 0.9624 & 0.8272 & 0.9683 & 0.0800 & 0.0376 & 0.9806 \\ \hline
LED         & 0.9500 & 0.9996 & 0.9996 & 1.0000 & 1.0000 & 1.0000 & 0.0667 & 0.0000 & 0.9998 \\ \hline
Random      & 0.9500 & 0.9999 & 0.9999 & 1.0000 & 1.0000 & 1.0000 & 0.0400 & 0.0000 & 1.0000 \\ \hline
Shake       & 0.9500 & 0.9763 & 0.9996 & 0.9765 & 1.0000 & 0.9759 & 0.0613 & 0.0235 & 0.9879 \\ \hline
Temperature & 0.9500 & 0.9954 & 0.9996 & 0.9957 & 0.8287 & 1.0000 & 0.0587 & 0.0043 & 0.9977 \\ \hline
Vibration   & 0.9500 & 0.9996 & 0.9996 & 1.0000 & 1.0000 & 1.0000 & 0.0613 & 0.0000 & 0.9998 \\ \hline
XTS         & 0.9500 & 0.9619 & 0.9995 & 0.9622 & 0.8264 & 0.9681 & 0.0827 & 0.0378 & 0.9805 \\ \hline
N0\_4       & 0.9700 & 0.9998 & 0.9998 & 1.0000 & 1.0000 & 1.0000 & 0.0275 & 0.0000 & 0.9999 \\ \hline
N1\_4       & 0.9700 & 0.9998 & 0.9998 & 1.0000 & 1.0000 & 1.0000 & 0.0275 & 0.0000 & 0.9999 \\ \hline
N2\_4       & 0.9700 & 0.9998 & 0.9998 & 1.0000 & 1.0000 & 1.0000 & 0.0275 & 0.0000 & 0.9999 \\ \hline
N3\_4       & 0.9700 & 0.9998 & 0.9998 & 1.0000 & 1.0000 & 1.0000 & 0.0250 & 0.0000 & 0.9999 \\ \hline
N0\_6       & 0.9500 & 0.9996 & 0.9996 & 1.0000 & 1.0000 & 1.0000 & 0.0278 & 0.0000 & 0.9998 \\ \hline
N1\_6       & 0.9700 & 0.9996 & 0.9996 & 1.0000 & 1.0000 & 1.0000 & 0.0233 & 0.0000 & 0.9998 \\ \hline
N2\_6       & 0.9700 & 0.9994 & 0.9994 & 1.0000 & 1.0000 & 1.0000 & 0.0389 & 0.0000 & 0.9997 \\ \hline
N3\_6       & 0.9700 & 0.9995 & 0.9995 & 1.0000 & 1.0000 & 1.0000 & 0.0311 & 0.0000 & 0.9998 \\ \hline
N4\_6       & 0.9500 & 0.9995 & 0.9995 & 1.0000 & 1.0000 & 1.0000 & 0.0311 & 0.0000 & 0.9998 \\ \hline
N5\_6       & 0.9700 & 0.9996 & 0.9996 & 1.0000 & 1.0000 & 1.0000 & 0.0233 & 0.0000 & 0.9998 \\ \hline
Environment & 0.9500 & 0.9999 & 0.9999 & 1.0000 & --     & 1.0000 & 0.0317 & 0.0000 & 1.0000 \\ \hline
Network     & 0.9500 & 0.9998 & 0.9998 & 1.0000 & --     & 1.0000 & 0.0873 & 0.0000 & 0.9999 \\ \hline
Motor       & 0.9500 & 0.9999 & 0.9999 & 1.0000 & --     & 1.0000 & 0.0635 & 0.0000 & 0.9999 \\ \hline
Security    & 0.9500 & 0.9999 & 0.9999 & 1.0000 & --     & 1.0000 & 0.0556 & 0.0000 & 0.9999 \\ \hline
Data        & 0.9500 & 0.9998 & 0.9998 & 1.0000 & --     & 1.0000 & 0.0873 & 0.0000 & 0.9999 \\ \hline
\textbf{Average} & \textbf{0.9570} & \textbf{0.9942} & \textbf{0.9997} & \textbf{0.9945} & \textbf{0.9712} & \textbf{0.9951} & \textbf{0.0486} & \textbf{0.0055} & \textbf{0.9970} \\ \hline
\end{tabular}}
\end{table*}

\begin{figure*}[t]
    \centering
    \includegraphics[width=0.9\linewidth]{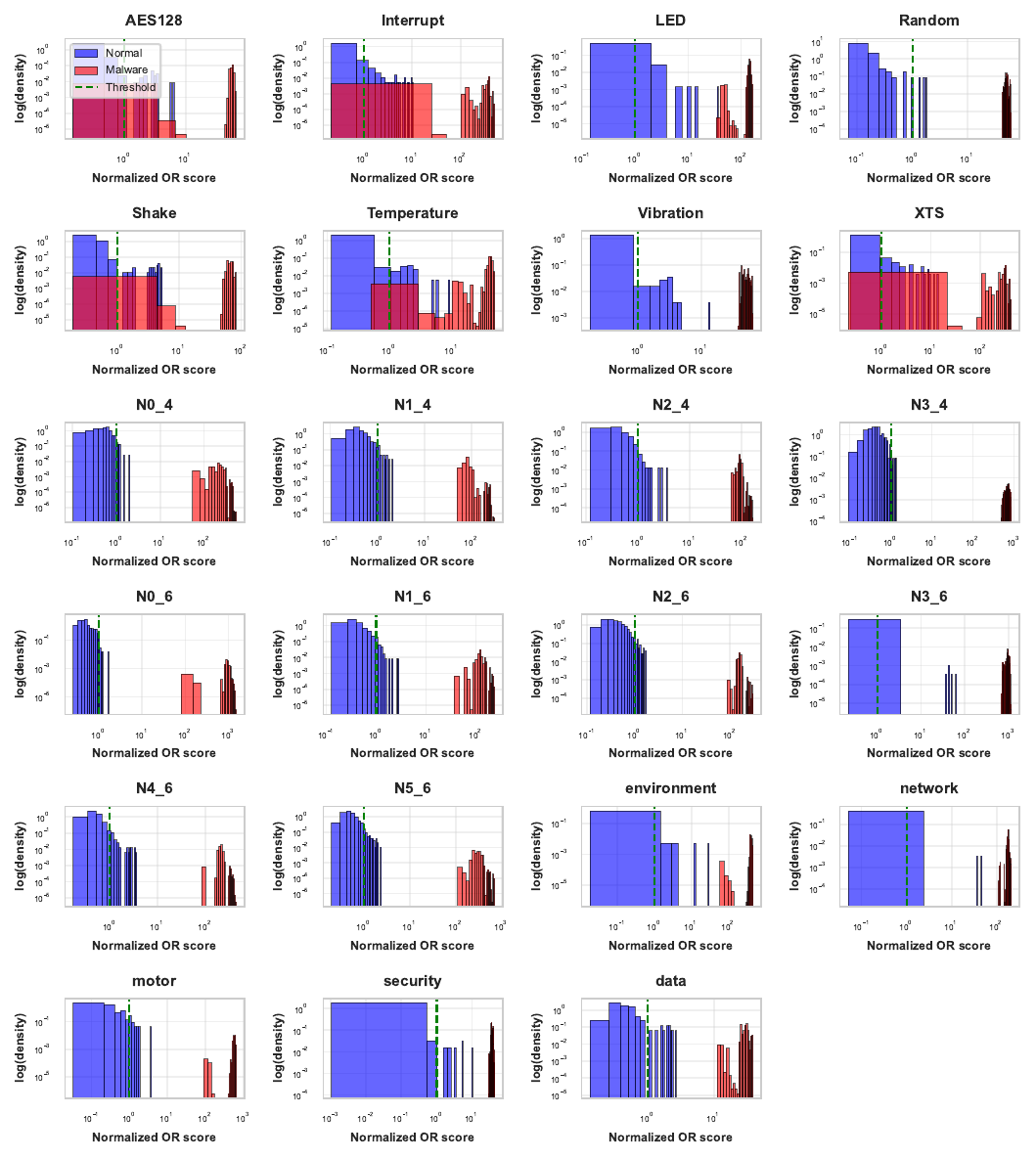}
    \caption{Firmware-wise logarithmic reconstruction error distribution with respect to thresholds.}
    \label{fig:error}
\end{figure*}

\section{Results}
\label{sec:results}
This section presents the experimental results for predictive performance on modified or injected cases and simulated attacks, evaluation of physical twin data, sensitivity and scalability analyses, and comparison with prior work.

\subsection{TinyML Model Performance}
We set $\epsilon=1$s as a conservative freshness period, $n_f=0.2$ for denoising, and $F=64$. Table \ref{tab:models} compares the predictive performance and memory overheads of three candidate architectures, $M_1$, $M_2$, and $M_3$, while Figure \ref{fig:size} demonstrates the .keras to .tflite size reduction. All models were trained on the same SRAM data section traces from \textit{safe} firmware. $M_1$ achieves the best balance between predictive accuracy and resource efficiency, achieving an accuracy of 99.42\% (± 1.22\%), a TPR of 99.45\% (± 1.22\%), and the highest TNR of 95.14\% (± 2.23\%) with just 4.9KB models and 1.91KB tensor arena. $M_2$ matches $M_1$ on accuracy (99.40\%) and TPR (99.43\%) with a marginally lower TNR (94.89\%), but at the cost of increased memory. Finally, $M_3$ achieves comparable accuracy (99.41\%) but requires a significantly larger model and tensor arena size, due to its convolutional feature maps. This is also the reason for a lower size reduction from .tflite conversion in $M_2$ ($5.76\times$) compared to $M_1$ ($7.66\times$) and $M_2$ ($7.61\times$). These results indicate that the added architectural complexity of $M_2$ and $M_3$ does not result in proportional improvement, while substantially increasing the runtime memory footprint. Based on this analysis, $M_1$ is selected for all subsequent experiments.

Table \ref{tab:prediction} presents the firmware-wise performance of $M_1$ across all SRAM datasets. $M_1$ achieves an average accuracy of 99.42\%, precision of 99.97\%, TPR of 99.45\%, TNR of 95.14\%, and F1-score of 99.70\%. The overall confusion matrix for all anomaly detection tasks is shown in Fig. \ref{fig:confusion} and the dataset-wise performance is highlighted in Fig. \ref{fig:datasets}. $M_1$ achieves 97.12\% TPR on \textit{unsafe} samples, showing discriminative capability for data and function modifications in the binary, and data injection attacks using the DCT-compressed SRAM $.data/.bss$ and $stack/heap$ sections. Further, the model achieves 99.51\% TPR on \textit{other} samples. Fig. \ref{fig:rocauc} presents the ROC-AUC for all firmware. \textit{LiteAtt} achieves an ROC-AUC of 100\% for 18 out of 23 firmware and exceeds 97.8\% for all firmware, highlighting strong discrimination in anomaly detection. Fig. \ref{fig:error} illustrates separability between the reconstruction error of normal and malware classes relative to the selected conservative thresholds, indicating that the adaptive thresholding mechanism in Equations (\ref{eq:adaptive}) and (\ref{eq:binary}) conservatively adjusts the TNR target based on the spread of validation errors to achieve high TPR across all firmware.

\subsection{Performance on Simulated Attacks}
\label{sec:sim-attacks}
Fig.~\ref{fig:sens-extent} presents DOP, ROP, and JOP detection by $M_1$. ROP and JOP are detected at $\geq$98\% from 16~bytes onward, and DOP from 64~bytes. All three saturate near 100\% beyond 5\% of the section. ROP/JOP are easier than DOP because their 32-bit flash-address patterns are highly non-uniform. In contract DOP perturbations are small and localized, overlapping with benign low-frequency variation and yielding a weaker DCT signal.

In addition, Fig.~\ref{fig:sens-random} presents the random-perturbation sensitivity analysis across all firmware, which also serves as the quality analysis in the update phase (Section \ref{sec:proposed}). As the figure shows, mean detection crosses 95\% at $\eta\!\approx\!2\%$ and saturates to 100\% above $\eta=5\%$ for the $.data/.bss$ section models, while the $stack/heap$ models reach 95\% at $\eta\!\approx\!1\%$. Fig.~\ref{fig:sens-adapt-vs-obliv} contrasts the random-perturbation bound against the three sub-tiers of $\mathcal{A}_{adapt}$. Both $\mathcal{A}_{adapt}^{none}$ and $\mathcal{A}_{adapt}^{xfw}$ yield $FNR\!=\!0$ at every $\eta$ and only $\mathcal{A}_{adapt}^{1}$ registers non-zero advantage ranging from $0.077$ at $\eta\!=\!1\%$ to $0.287$ at $\eta\!=\!10\%$. The qualitative jump between $\mathcal{A}_{adapt}^{xfw}$ and $\mathcal{A}_{adapt}^{1}$ identifies benign-SRAM extraction as the adversarial capability that breaks the defense. These curves form the empirical basis for the attestation bounds shown in in Section~\ref{sec:security}.

\begin{figure}[t]
    \centering
    \includegraphics[width=\linewidth]{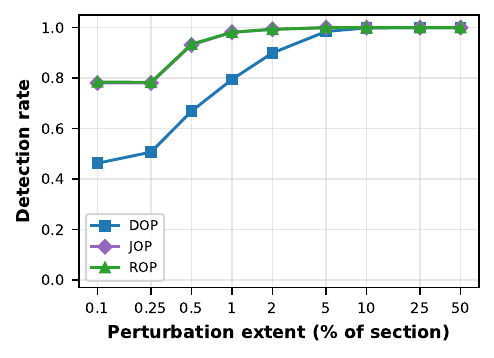}
    \caption{Detection rate vs.\ perturbation extent for simulated DOP/ROP/JOP attacks on $M_1$ at $F=64$.}
    \label{fig:sens-extent}
\end{figure}

\begin{figure}[t]
    \centering
    \includegraphics[width=\linewidth]{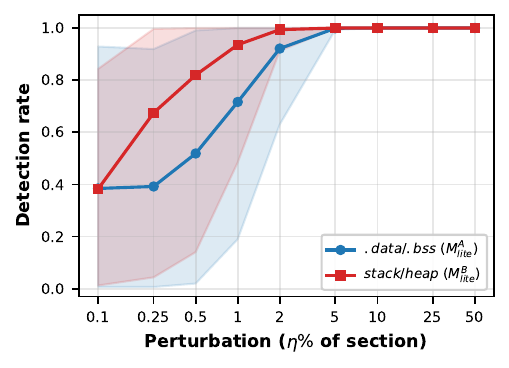}
    \caption{TPR of $M_1$ ($F{=}64$) on \textit{safe} traces with $\eta\%$ of bytes replaced by uniformly random values. Shaded bands represent per-firmware min-max.}
    \label{fig:sens-random}
\end{figure}

\begin{figure}[t]
    \centering
    \includegraphics[width=\linewidth]{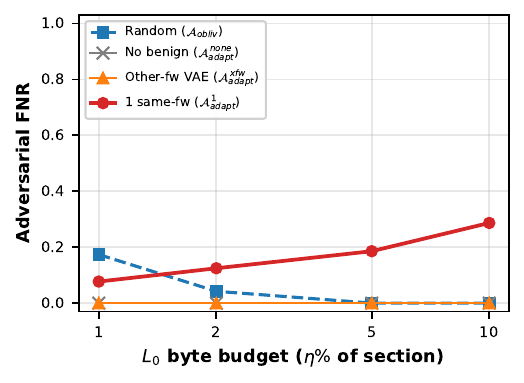}
    \caption{Adversarial FNR vs $\eta$ for four types of adversarial attacks on $M_{lite}$.}
    \label{fig:sens-adapt-vs-obliv}
\end{figure}

\subsection{Performance on Twin Data}
In addition to distinguishing between various firmwares, their data and functional modifications, data injection, DOP, ROP, and JOP attacks using anomaly detection, we also demonstrate the transferability of TinyML models trained on twin DCT-compressed SRAM data to support Property \ref{prop:2} while mitigating the impact of Property \ref{prop:3}. $M_1$ achieves a 97.2\% TNR and 100\% TPR (nearly equal to the values reported in Table \ref{tab:prediction}) when trained on twin samples and tested on safe samples using the adaptive threshold. This outcome applies to both $M^A_{lite}$ and $M^B_{lite}$ and highlights the transferability of DCT-compressed SRAM traces between twin devices hosting the same firmware despite Property \ref{prop:3}, supporting the use of twin devices to create \textit{LiteAtt} updates post-deployments.

\begin{table}[t]
\centering
\caption{Per-operation latency and energy overheads.}
\label{tab:runtime}
\renewcommand{\arraystretch}{1.2}
\resizebox{\columnwidth}{!}{
\begin{tabular}{|l|c|cc|cc|cc|}
\hline
\multirow{2}{*}{\textbf{Operation}} & \multirow{2}{*}{\textbf{Count}} & \multicolumn{2}{c|}{\textbf{Portenta C33}} & \multicolumn{2}{c|}{\textbf{Portenta H7}} & \multicolumn{2}{c|}{\textbf{Nano 33 BLE}} \\ \cline{3-8}
 & & Lat (ms) & E ($\mu$J) & Lat (ms) & E ($\mu$J) & Lat (ms) & E ($\mu$J) \\ \hline
Harvest + DCT  & 2 & 29.226 & 964.5  & 5.631 & 1892.1 & 50.478 & 454.3 \\ \hline
Standardize & 2 & 0.014 & 0.47   & 0.007 & 2.3     & 0.059  & 0.5 \\ \hline
int8 inference & 2 & 0.129 & 4.26   & 0.034 & 11.6    & 0.386  & 3.5 \\ \hline
Threshold  & 2 & 0.006 & 0.21   & 0.001 & 0.45    & 0.009  & 0.08 \\ \hline
Payload                     & 1 & 0.003 & 0.09   & 0.002 & 0.77    & 0.017  & 0.15 \\ \hline
Ed25519 sign & 1 & 19.06 & 629.1  & 1.44  & 485.5   & 18.19  & 163.7 \\ \hline
X25519 keygen               & 1 & 42.41 & 1399.5 & 2.81  & 942.6   & 36.56  & 329.0 \\ \hline
X25519 DH                   & 1 & 42.41 & 1399.5 & 2.81  & 944.0   & 36.59  & 329.3 \\ \hline
HKDF                        & 1 & 0.291 & 9.6    & 0.054 & 18.1    & 0.709  & 6.4 \\ \hline
Ed25519 verify & 2 & 59.88 & 1975.9 & 3.89  & 1305.5 & 49.82  & 448.3 \\ \hline
AES-CBC & 2 & 0.156 & 5.14   & 0.024 & 8.15    & 0.325  & 2.9 \\ \hline
HMAC-SHA256 & 2 & 0.144 & 4.76   & 0.029 & 9.7     & 0.363  & 3.3 \\ \hline
\textbf{Total}              &   & \textbf{283.3} & \textbf{9348.3} & \textbf{26.3} & \textbf{8850.6} & \textbf{294.9} & \textbf{2654.3} \\ \hline
\textbf{Peak RAM} & & \multicolumn{6}{c|}{\textbf{11.83KB}} \\ \hline
\end{tabular}}
\end{table}

\subsection{Overheads}
\subsubsection{Memory}
For $M_1$, each section model occupies 4.91KB as a quantized TFLite binary, which is a $7.66\times$ reduction from the 37.64KB as a Keras model, and uses a 1.91KB tensor arena. The fixed $F=64$-dimensional input keeps the model architecture and tensor sizes constant across firmware and devices. The peak protocol and algorithm footprint shown in Table \ref{tab:runtime} is well within the 256KB-1MB SRAM budgets of the target Arm Cortex-M boards.

\subsubsection{Latency and Energy}
Table \ref{tab:runtime} reports the per-operation latency and energy of one full mutual attestation and handshake on the three boards, measured with the DWT cycle counter and a datasheet-derived energy model ($P_{active} = V_{core} \times I_{active}$). The \textit{Count} column gives per-device invocations of each operation, and totals apply these multipliers and time both section models unconditionally to reflect the worst case. For example, as per Algorithm \ref{algo:generate}, the deployment-time disjunction does not execute Model B when Model A flags the SRAM $.data/.bss$ state. The end-to-end latency per-peer is 283.3ms on the Portenta C33, 26.3ms on the Portenta H7, and 294.9ms on the Nano 33 BLE Sense, total energy is 9.34mJ, 8.85mJ, and 2.65mJ, respectively. Peak $\text{App}_{SA}$ RAM is 11.83KB on all three boards. Please note that the overheads shown here are derived from software implementations of cryptographic primitives. Using hardware acceleration would result in smaller overheads.

To contextualize these low-energy footprints within operational lifetimes, a typical edge node operating on a 3.7V, 2000 mAh LiPo battery has a total energy reservoir of $3.7 V \times 2Ah \times 3,600 s = 26,640J$. Under continuous execution, this capacity theoretically translates to approximately 2.85 million, 3.01 million, and 10.04 million full mutual attestation sessions on the Portenta C33, Portenta H7, and Nano 33 BLE configurations, respectively. This structural efficiency highlights \textit{LiteAtt}'s applicability to long-term installations without risking accelerated power consumption due to attestation overheads.

\subsection{Scalability Analysis}
We now evaluate the scalability of \textit{LiteAtt} to large and dynamic IoT networks with frequent firmware updates. Since attestation is performed on the prover itself and the peer only validates a certificate-bound signature and binary verdict, the recipient requires no per-device ML model or reference hash state. A firmware update requires only an OTA patch to the individual device, with no re-distribution across the network. This means the \textit{\textbf{per-connection attestation cost is constant regardless of network size}}, and network-wide firmware integrity emerges naturally from the composition of independent pairwise handshakes without any topology management overhead. This is an improvement over all prior work on SA and P2P attestation.

\subsection{Comparison with Prior Work}
Table \ref{tab:comp} compares \textit{LiteAtt} against representative ML-based RA, SA, and P2P-SA schemes. Flash-based SA and P2P schemes, including ERASMUS, SIMPLE, FlashAttest, and LIRA-V, are blind to runtime attacks and impose flash-size-dependent prover overheads of the order $\sim10^{-1}$\,s/KB to $\sim10^{1}$,s, precluding seamless handshake integration. Prior SRAM-based RA methods \cite{aman2022machine,iqbal2024ram,kohli2024swarm,kohli2024safe} achieve comparable accuracy by relying on remote verifiers and transmit raw SRAM dumps to a recipient-hosted ML model, sacrificing privacy and scalability. TRACES \cite{caulfield2024traces} incurs continuous instrumentation overhead, and RAGE \cite{chilese2024one} requires CFG instrumentation and ML model maintenance on a remote verifier. TLS-Att \cite{tschofenig2025using} also requires recipients to host up-to-date reference states for comparison. \textit{LiteAtt} is the only scheme that simultaneously provides wide runtime attack coverage, SRAM privacy, a bounded prover-side latency of $10^{-2}$-$10^{-1}$s across real Cortex-M hardware, scalable P2P integration through low-latency handshakes and stateless recipients, and targeted security updates to only the device receiving a firmware update instead of all peers.

\begin{table}[t]
\centering
\caption{Comparison of \textit{LiteAtt} with recent ML-based RA, SA, and P2P-SA approaches in terms of attack coverage, accuracy (if applicable), privacy preservation, P2P scalability, and the order of prover latency.}
\label{tab:comp}
\renewcommand{\arraystretch}{1.2}
\resizebox{\columnwidth}{!}{%
\begin{tabular}{|l|c|c|c|c|c|}
\hline
\textbf{Reference} & \textbf{Attacks} & \textbf{Accuracy} & \textbf{Privacy} & \textbf{P2P \& Scalability} & \textbf{\specialcell{Prover \\ Latency (s)}} \\ \hline
RAGE \cite{chilese2024one}                      & Wide   & 91--98\%         & \cmark & Very poor (ML)   & N/R \\ \hline
TRACES \cite{caulfield2024traces}               & Wide   & N/R              & \cmark & Poor (hash)  & Continuous \\ \hline
Aman et al. \cite{aman2022machine}              & Wide   & 96.0\%           & \xmark & Very poor (ML)   & N/R \\ \hline
Iqbal et al. \cite{iqbal2024ram}                & Wide   & 100.0\%          & \xmark & Very poor (ML)   & N/R \\ \hline
Swarm-Net \cite{kohli2024swarm}                 & Data   & 99.7\%           & \xmark & Very poor (ML)   & N/R \\ \hline
SAFE-IoT \cite{kohli2024safe}                   & Data   & 95.0\%           & \xmark & Very poor (ML)   & N/R \\ \hline
ERASMUS \cite{carpent2018erasmus}               & Static & N/A              & \cmark & Poor (hash)  & $\sim10^{-1}$/KB \\ \hline
SIMPLE \cite{ammar2020simple}                   & Static & N/A              & \cmark & Poor (hash)  & $\sim10^{-1}$/KB \\ \hline
FlashAttest \cite{zhang2025flashattest}         & Static & N/A              & \cmark & Poor (hash)  & $\sim10^{-1}$/KB \\ \hline
SAFEHIVE \cite{ferro2024safehive}               & Static & N/A              & \cmark & Moderate (DHT)  & N/R \\ \hline
LIRA-V \cite{shepherd2021lira}                  & Static & N/A              & \cmark & Poor (hash)  & $\sim10^1$ \\ \hline
TLS-Att \cite{tschofenig2025using}              & \specialcell{Evidence\\dependent} & N/A              & \cmark & Poor (reference)  & \specialcell{Evidence\\dependent} \\ \hline
\textbf{\textit{LiteAtt} (ours)}                & \textbf{Wide} & \textbf{99.42\%} & \cmark & \textbf{Good (PKI only)} & \textbf{$10^{-1}$} \\ \hline
\end{tabular}}
\end{table}

\section{Security Analysis}
\label{sec:security}
This section provides a game-based security analysis of \textit{LiteAtt} following standard formulations for authenticated key-exchange and protocol security. Table \ref{tab:security} summarizes its security provisions.

\textit{\textbf{Security Assumptions:}} We make the following assumptions:
\begin{enumerate}
    \item[$A_1$ (IND-CPA)] AES-CBC provides indistinguishability under chosen plaintext attacks:
    \begin{equation}
        Adv^{IND-CPA}_{AES}(\mathcal{A}) \leq negl(\lambda).
    \end{equation}
    \item[$A_2$ (EUF-CMA)] HMAC-SHA256 and Ed25519 are existentially unforgeable under chosen-message attack:
    \begin{equation}
    Adv^{EUF-CMA}_{HMAC,Ed25519}(\mathcal{A}) \leq negl(\lambda).
    \end{equation}
    \item[$A_3$ (PRG)] $PRNG$ generates unpredictable 128-bit outputs and uniformly random X25519 ephemeral scalars. The probability of a nonce collision across $q$ sessions is bounded by $q^2 \cdot 2^{-128}$.
    \item[$A_4$ (TEE Integrity)] The Arm TrustZone TEE provides (i) isolated execution such that $\text{App}_{SA}$ and cryptographic operations cannot be observed or modified by the unssecure world (ii) secure storage for keys and parameters during runtime. Side-channel attacks are out of scope and assumed mitigated through constant-time software and split-cache implementations \cite{weissteiner2025teecorrelate}.
    \item[$A_5$ (Long-term Key Security)] Each device's long-term identity private key $SK$ is non-extractable from the TEE during operation. The CA's signing key is held offline and out of the adversary's reach. All cryptographic operations execute via secure primitives without key information leaving the protected boundary.
    \item[$A_6$ (SRAM Patterns)] Physical or digital twin devices running the same firmware produce consistent SRAM patterns.
    \item[$A_7$ (Loose Clock Synchronization)] Devices in the same network maintain monotonic clocks loosely synchronized to within a tolerance smaller than the freshness window $\epsilon$.
    \item[$A_8$ (CDH)] X25519 provides computational Diffie-Hellman hardness:
    \begin{equation}
    Adv^{CDH}_{X25519}(\mathcal{A}) \leq negl(\lambda).
    \end{equation}
    \item[$A_9$ (KDF)] HKDF behaves as a secure key derivation function. When invoked on a high-entropy input $Z$ (i.e., the X25519 shared secret), its output $K$ is computationally indistinguishable from a uniformly random string of the same length.
    \item[$A_{10}$ (Flash Atomicity)] Firmware update requires a device reset, which resets all active sessions and ephemeral key material. 
\end{enumerate}

\begin{table}[t]
\centering
\caption{Summary of security properties.}
\label{tab:security}
\renewcommand{\arraystretch}{1.4}
\resizebox{\columnwidth}{!}{
\begin{tabular}{|c|c|c|c|c|}
\hline
\textbf{Property} & \textbf{Game} & \textbf{\begin{tabular}[c]{@{}c@{}}Supporting\\ Assumptions\end{tabular}} & \textbf{\begin{tabular}[c]{@{}c@{}}Adversary\\ Advantage\end{tabular}} & \textbf{Security} \\ \hline
Authentication  & $G_{AUTH}$    & $A_{2,5,8}$ & $negl(\lambda)$ & Cryptographic \\ \hline
Replay          & $G_{REPLAY}$  & $A_{1-3}, A_5, A_7, A_8$ & $q^2 \cdot 2^{-128}$ & Cryptographic \\ \hline
Forward Secrecy & $G_{FS}$      & $A_{3,4,8,9}$ & $negl(\lambda)$ & Cryptographic \\ \hline
Confidentiality & $G_{CONF}$    & $A_{1,2,5,8,9}$ & $negl(\lambda)$ & Cryptographic \\ \hline
Integrity       & $G_{INT}$     & $A_{2,5}$   & $negl(\lambda)$ & Cryptographic \\ \hline
Attestation     & $G_{IMP}$     & $A_{2,4-7}$ & $FNR_{\text{adv}}(\eta)$ & Statistical \\ \hline
SRAM privacy    & $G_{PRIV}$    & $A_{4-6}$ & $negl(\lambda)$ & Cryptographic \\ \hline
\end{tabular}}
\end{table}

\textit{\textbf{Threat Model:}} The adversary $\mathcal{A}$ is a probabilistic polynomial-time (PPT) attacker that can (i) replace $\mathcal{F} \rightarrow \mathcal{F}_m$ in the unsecure world of any prover, (ii) eavesdrop on, modify, inject, replay, or drop network messages, (iii) query polynomial-bounded encryption, MAC, and signing oracles for messages of its choice without learning the underlying keys, (iv) compromise the long-term identity private keys $SK$ of any device \emph{after} a target session has concluded and its ephemeral keys $esk$ have been discarded, and (v) capture one or more authorized devices for offline adversarial modeling. $\mathcal{A}$ cannot (i) extract long-term or ephemeral keys from the TEE during an active session, (ii) tamper with TEE-resident code or data, (iii) compromise the IoT vendor's CA signing key, or (iv) break the underlying cryptographic primitives (AES-CBC, HMAC-SHA256, X25519, Ed25519, HKDF) in polynomial time. We additionally assume $\mathcal{A}$ has no \textit{direct} query access to $M_{lite}$.

We model each security property as a game between a challenger $\mathcal{C}$ and adversary $\mathcal{A}$.

\subsection{Mutual Authentication}
Mutual authentication ensures that both peers verify each other's identity with freshness.

\subsubsection{Entity Authentication} (Game $G_{AUTH}$) $\mathcal{C}$ initializes $\mathcal{P}_i, \mathcal{P}_j$ with long-term identity key-pairs $(SK_i, PK_i), (SK_j, PK_j)$ and certificates $\text{Cert}_i, \text{Cert}_j$. $\mathcal{A}$ has access to the network and to signing oracles $\mathcal{O}_{Sign}(\cdot)$ for messages of its choice under each long-term private key. $\mathcal{C}$ runs polynomial-number of protocol sessions between $\mathcal{P}_i, \mathcal{P}_j$. $\mathcal{A}$ wins if any honest party accepts a session that has no matching conversation at the intended peer.

\begin{theorem}
\label{thm:auth}
Under $(A_2, A_5, A_8)$, $Adv^{AUTH}_{LiteAtt}(\mathcal{A}) \leq negl(\lambda)$.
\end{theorem}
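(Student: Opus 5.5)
The proof is a reduction: an adversary that wins $G_{AUTH}$ is turned into a forger against Ed25519 or HMAC-SHA256 (assumption $A_2$), or into a CDH solver for X25519 ($A_8$). Assumption $A_5$ ensures the adversary never holds $SK_i, SK_j$ during the game.

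We first clarify the signing oracle $\mathcal{O}_{Sign}$. Messages produced by honest protocol executions are treated as ``honestly signed''. A forgery is a valid signature on a transcript string $epk_i\|N_1\|\mathcal{R}_i$ or $epk_i\|epk_j\|N_1\|N_2\|\mathcal{R}_j$ that was never output by an honest instance or queried to $\mathcal{O}_{Sign}$. With this convention, the argument proceeds as a sequence of game hops, splitting on which party accepts without a matching conversation.

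\textbf{Case 1: $\mathcal{P}_j$ accepts without a matching conversation at $\mathcal{P}_i$.}
\begin{enumerate}
\item $\mathcal{P}_j$ accepted a Step-1 message with a valid $\sigma_i$ under $PK_i$. $\text{Cert}_i$ verifies under the CA key, and the CA key is uncompromised by $A_5$, so $PK_i$ really belongs to $ID_i$.
\item If $\sigma_i$ was not produced by an honest $\mathcal{P}_i$, we build a forger $\mathcal{B}$. $\mathcal{B}$ guesses the target identity, losing a factor $n$, and simulates all other keys itself. It embeds the Ed25519 challenge key as $PK_i$, answers $\mathcal{P}_i$'s signing needs with its own signing oracle, and outputs $\sigma_i$. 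This gives $\Pr \le n\cdot Adv^{EUF\text{-}CMA}_{Ed25519}$.
\item Otherwise, $\sigma_i$ was honestly generated, but possibly in another session. In that case the adversary replays a Step-1 message. This alone does not make $\mathcal{P}_j$ accept, since acceptance at $\mathcal{P}_j$ requires the Step-4 check.
\item That check requires a valid $I_3$ under $K$ on a fresh $m_3$ containing $\mathcal{P}_j$'s fresh $N_2$. The key $K$ derives from $Z = X25519(esk_j, epk_i)$ with $esk_i$ known only to the honest $\mathcal{P}_i$. Computing $Z$ from $(epk_i, epk_j)$ is a CDH instance, so we reduce to $A_8$.
\item Once $Z$ is hidden, we treat $K$ as unknown (invoking the random-oracle/KDF view of HKDF implicitly). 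Producing a valid $I_3$ is then an HMAC forgery under $A_2$.
\item Hence the only way $\mathcal{P}_j$ accepts is if the honest $\mathcal{P}_i$ holding $esk_i$ produced $m_3$. $\mathcal{P}_i$ only does so after receiving $\mathcal{P}_j$'s signed Step-2 message containing $epk_j$ and $N_2$, which yields a matching conversation.
\end{enumerate}

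\textbf{Case 2: $\mathcal{P}_i$ accepts without a matching conversation at $\mathcal{P}_j$.} This case is symmetric and actually simpler. $\sigma_j$ signs $epk_i\|epk_j\|N_1\|N_2\|\mathcal{R}_j$, which includes the fresh $N_1$ and $epk_i$ of this session. Any valid $\sigma_j$ is therefore either a forgery, reducing to EUF-CMA, or was produced by an honest $\mathcal{P}_j$ responding to exactly this $(epk_i, N_1)$. Step 5 then binds $N_3$ via the same CDH and HMAC argument.

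Fresh-nonce collisions across $q$ sessions contribute $q^2 2^{-128}$, which is negligible.

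Summing the hops gives $Adv^{AUTH} \le 2n\,Adv^{EUF}_{Ed25519} + 2\,Adv^{CDH} + 2q\,Adv^{EUF}_{HMAC} + q^2 2^{-128} \le negl(\lambda)$.

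\textbf{Main obstacle.} The hardest step is Case 1, where the Step-1 message is replayed. The signature there does not bind $\mathcal{P}_j$'s nonce, so authentication of $\mathcal{P}_i$ rests on key confirmation. That requires the CDH-to-HMAC hop, which needs $K$ to be unpredictable from a hidden $Z$. $A_9$ is not listed among the theorem's assumptions, so I would argue only unpredictability: CDH hardness plus modelling HKDF as a random oracle, so that any MAC forgery either queries $Z$, which breaks CDH, or forges HMAC under a uniformly random key.
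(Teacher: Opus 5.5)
Your case split in Case~1 is not exhaustive, and the missing case is an outright attack. You argue that either $\sigma_i$ was not produced by an honest $\mathcal{P}_i$, in which case $\mathcal{B}$ outputs it as an Ed25519 forgery, or it was honestly generated in some session. But $G_{AUTH}$ gives $\mathcal{A}$ a signing oracle under $SK_i$ and $SK_j$ for messages of its choice. That is a third source of valid signatures, and by your own convention it yields no forgery.

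Concretely, let $\mathcal{A}$ pick $(esk_A, epk_A)$ and $N_1$, set $\mathcal{R}_A=(ID_i,0,t,N,epk_A\|N_1)$ with the current $t$, and query $\sigma=\mathcal{O}_{Sign}(epk_A\|N_1\|\mathcal{R}_A)$ under $SK_i$. Every check at $\mathcal{P}_j$ passes: $\text{Cert}_i$ is public, and the $ctx$ check compares against $epk_A\|N_1$ taken from the same message. Since $\mathcal{A}$ knows $esk_A$, it derives $Z$ and $K$ and returns a valid $(m_3,I_3)$ carrying $\mathcal{P}_j$'s $N_2$. So $\mathcal{P}_j$ accepts although $\mathcal{P}_i$ never ran.

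Your reduction $\mathcal{B}$ has nothing to output, because the message was queried. Your CDH hop in step~4 also fails, because $epk_A$ was not generated by $\mathcal{P}_i$. Case~2 breaks the same way with an oracle signature under $SK_j$ on $epk_i\|epk_A\|N_1\|N_2\|\mathcal{R}_A$. So the statement is false for $G_{AUTH}$ as written, and no proof can succeed without changing the game. You must restrict $\mathcal{O}_{Sign}$ to refuse strings that parse as Step-1 or Step-2 transcripts, or drop it so that the only signatures under $SK_i, SK_j$ come from honest sessions. With that change your dichotomy becomes exhaustive. The paper's proof has the same blind spot: it calls the Step-1 check ``the EUF-CMA forgery experiment'' without noticing that the game's oracle will sign a transcript.

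With the oracle restricted, your route uses the paper's ingredients (EUF-CMA for the transcript signatures, $A_5$, CDH behind key confirmation) and is more accurate about what each one does. The paper lets $\sigma_i$ authenticate $\mathcal{P}_i$ and treats key confirmation as an add-on. You correctly observe that $\sigma_i$ covers only initiator-chosen values and that the $ctx$ check is self-referential, so a Step-1 message replays within $\epsilon$. The signature therefore only certifies that $epk_i$ came from $\mathcal{P}_i$, and liveness must come from Steps~3--4. That step needs a KDF or random-oracle property of HKDF plus HMAC unforgeability on top of $A_8$; the paper uses the same thing without listing $A_9$.

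Three smaller details need fixing before your stated bound holds:
\begin{enumerate}
\item Plain CDH with HKDF as a random oracle costs the usual factor $q_H$ plus a session guess. This is only a polynomial loss.
\item The protocol keys both AES-CBC and HMAC with the same $K$. When the honest $\mathcal{P}_i$ has already sent $m_3$ under $K$ and $\mathcal{A}$ alters it in transit, your ``HMAC forgery'' reduction cannot simulate $Enc(\cdot,K)$. You need independent encryption and MAC keys derived from HKDF, or a joint-security assumption.
\item With matching conversations defined on full messages, EUF-CMA does not stop $\mathcal{A}$ from swapping $\sigma_i$ for another valid signature on the same string. You need strong unforgeability (which strict Ed25519 verification gives), or define matching on the signed content.
\end{enumerate}
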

\begin{proof}
For $\mathcal{P}_j$ to accept Step 1 as originating from $\mathcal{P}_i$, the message must include a valid Ed25519 transcript signature $\sigma_i \leftarrow \text{Sign}(epk_i \| N_1 \| \mathcal{R}_i, SK_i)$. Forging such a signature over the current transcript without $SK_i$ is the EUF-CMA forgery experiment, bounded by $negl(\lambda)$ under $A_2$. Non-extractability of $SK_i$ from the TEE ($A_5$) ensures the adversary's only path is explicit signature forgery. The reciprocal case for $\mathcal{P}_j$'s authentication follows symmetrically from $\sigma_j$. Key confirmation in Steps 3-4 additionally requires the adversary to derive the session key $K$, which by $A_8$ requires solving CDH on the X25519 public transcript components $(epk_i, epk_j)$, which is bounded by $negl(\lambda)$.
\end{proof}

\subsubsection{Replay Resistance} (Game $G_{REPLAY}$) $\mathcal{C}$ runs $q$ protocol sessions. $\mathcal{A}$ records the transcripts $\mathcal{H}$ and attempts replay attacks in a fresh session $q+1$. $\mathcal{A}$ wins if any replayed message beyond Step 1 is accepted, or if a replayed report $\mathcal{R}$ from outside the freshness window is accepted.

\begin{theorem}
\label{thm:replay}
Under $(A_{1-3}, A_5, A_7, A_8)$, $Adv^{REPLAY}_{LiteAtt}(\mathcal{A}) \leq q^2 \cdot 2^{-128}$.
\end{theorem}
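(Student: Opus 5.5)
The plan is a game-hopping argument over the four handshake messages, in the same style as the proof of Theorem~\ref{thm:auth}. The replay game is split by which message $\mathcal{A}$ replays. The freshness mechanisms are handled first, so that every remaining event reduces either to a nonce collision or to a primitive already assumed secure.

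\textbf{Game 0 to Game 1.} Game~0 is $G_{REPLAY}$. In Game~1 the challenger aborts if any two of the nonces $N_1,\dots,N_4$, or any two ephemeral keys $epk$, generated across the $q+1$ sessions coincide. By $A_3$ and a birthday bound over at most $4(q+1)$ draws of 128-bit values, the two games differ by at most $q^2\cdot 2^{-128}$, absorbing constants as the statement does. This hop is the only non-negligible loss, and it is where the stated bound comes from. Every later hop will be shown to lose only $negl(\lambda)$, which we fold into the bound.

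\textbf{Per-message cases in Game~1.} With nonces and ephemeral keys now unique per session:
\begin{itemize}
\item \emph{Step~2 replay.} A replayed $(ID_j,\text{Cert}_j,epk_j,N_1',N_2,\mathcal{R}_j,\sigma_j)$ carries an old $N_1'$, which differs from the fresh $N_1$. Since $N_1$ sits inside the Ed25519-signed transcript, $\mathcal{P}_i$ either rejects on the nonce mismatch or the adversary must forge $\sigma_j$. Forgery is bounded by $A_2$ together with non-extractability of $SK_j$ ($A_5$).
\item \emph{Step~3 replay.} A replayed $(ID_i,m_3,I_3)$ is authenticated under an old session key $K'$. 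Under $A_8$, the fresh $K$ is computed from new ephemerals, so $I_3$ fails verification unless $\mathcal{A}$ forges HMAC under $K$ ($A_2$). Even if $K'=K$ were possible, $m_3$ would decrypt to an old $N_2'\neq N_2$, and by $A_1$ the adversary cannot maul the ciphertext to embed the fresh $N_2$ without also forging $I_3$.
\item \emph{Step~4 replay.} This case is symmetric to Step~3, using the fresh $N_3$.
\item \emph{Report replay.} A stale report $\mathcal{R}$ is handled by Algorithm~\ref{algo:validate}. Its embedded $ctx$ contains the old $epk\|N_1$ (or $epk_j\|N_1\|N_2$), so it returns $-2$ on context mismatch. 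Failing that, with $A_7$ guaranteeing clock skew below $\epsilon$, it returns $-1$ when $time()-t>\epsilon$. Rebinding the report to a fresh $ctx$ would require re-signing, which is excluded by $A_2$ and $A_5$.
\end{itemize}

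\textbf{Expected main obstacle.} The hardest point is making the Step~3/Step~4 argument rigorous. CBC-then-HMAC under a fresh key needs the argument that $K$ is fresh and unknown to $\mathcal{A}$. I would first try to obtain this from $A_8$ alone, arguing that $\mathcal{A}$ cannot compute $Z$. Strictly, however, key indistinguishability also needs $A_9$, which is not listed among the theorem's assumptions. I would therefore phrase the step as: any accepted replay yields a valid HMAC on a message never MAC'd under the fresh $K$, contradicting $A_2$. Since the MAC key $K$ is derived from a CDH-hard $Z$, this relies only on $A_2$ and $A_8$.

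Summing the losses gives a total advantage of $q^2\cdot 2^{-128}+negl(\lambda)$. This is the stated bound once the negligible terms are absorbed.
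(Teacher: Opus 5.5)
Your proposal is correct at the paper's level of rigor and follows essentially the same argument. The paper likewise credits Step~2 to the Ed25519 signature over $N_1$, Steps~3--4 to the fresh session key, and stale reports to the $ctx$ mismatch (return $-2$) and the $\epsilon$-window under $A_7$, leaving the nonce-collision birthday term $q^2\cdot 2^{-128}$ as the only residual, so your game hop is a cleaner packaging of that argument. Relying on HMAC unforgeability rather than IND-CPA for Steps~3--4 is the more accurate choice; however, your workaround for $A_9$ does not really remove it, since HMAC's EUF-CMA guarantee is for a uniform key and so applying it to $K=\text{HKDF}(Z,\cdot)$ still needs $A_9$, a gap the paper's proof shares, as it does the silently dropped $negl(\lambda)$ term.
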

\begin{proof}
Each session uses fresh 128-bit nonces $N_1, \ldots, N_4$ from $PRNG$ ($A_3$) and a freshly-derived session key $K$ specific to that session's X25519 transcript exchange ($A_8$). A replayed handshake message at Step 2 carries a transcript signature bound to the current session's $N_1$ and $epk_i$, which the adversary cannot reproduce without signature forgery ($A_2$). Replayed encrypted messages $m_k$ ($k \geq 3$) are bound via encryption to nonces from the current session under the current session key, which the adversary cannot produce without solving CDH or breaking AES IND-CPA ($A_1$). 

For the self-attestation reports, embedding a fresh nonce $N$ and a TEE-generated timestamp $t$ ensures freshness. Reports that are stale beyond $\epsilon$ are structurally rejected by $\text{App}_{SA}.\texttt{Validate}$ under the loose clock synchronization of $A_7$. Furthermore, because the current session transcript context $ctx$ is explicitly embedded within the report body $\mathcal{R}$ during generation and checked disjunctively inside $\text{App}_{SA}.\texttt{Validate}$, an old report replayed into a new session will produce a context mismatch ($ctx_s \neq ctx$), causing Algorithm \ref{algo:validate} to return $-2$ and terminate the handshake. The only residual replay scenario is a pure token or nonce collision across $q$ sessions, bounded by the birthday bound $q^2 \cdot 2^{-128}$.
\end{proof}

\subsubsection{Forward Secrecy} (Game $G_{FS}$) The adversary $\mathcal{A}$ observes a completed session transcript $T$ between $\mathcal{P}_i$ and $\mathcal{P}_j$. After the session terminates and the ephemeral keys $esk_i, esk_j$ are discarded, $\mathcal{A}$ obtains both long-term keys $SK_i, SK_j$. $\mathcal{A}$ wins if it can recover the session key $K$ or any encrypted content from $T$.

\begin{theorem}
\label{thm:fs}
Under $(A_3, A_4, A_8, A_9)$, $Adv^{FS}_{LiteAtt}(\mathcal{A}) \leq negl(\lambda)$.
\end{theorem}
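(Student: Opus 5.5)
The plan is a short game-hopping argument that reduces any forward-secrecy adversary to a CDH solver on the ephemeral X25519 exchange, followed by a KDF-indistinguishability step. The starting point is that $K \leftarrow \text{HKDF}(Z, ID_i \| ID_j \| N_1 \| epk_i \| epk_j)$ depends on secret material only through $Z = X25519(esk_i, epk_j) = X25519(esk_j, epk_i)$. Everything else in the HKDF info string is public in the transcript $T$. The long-term keys $SK_i, SK_j$ enter the protocol only through the Ed25519 signatures $\sigma_i, \sigma_j$ and are never inputs to $Z$ or $K$. Revealing them after the session therefore gives $\mathcal{A}$ only the power to sign, which it can simulate anyway, and no information about $K$.

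The hybrids are as follows.
\begin{itemize}
\item \textbf{$G_0$} is the real $G_{FS}$ experiment.
\item \textbf{$G_1$} aborts if the ephemeral scalars collide with those of another session or are predictable. Under $A_3$ this changes the advantage by at most $q^2 2^{-128}$, and it guarantees that $esk_i, esk_j$ are uniform and independent of every other session.
\item \textbf{$G_2$} relies on erasure. Under $A_4$, the ephemeral scalars live only in the TEE and are erased at Step 5, so a post-session corruption reveals exactly $(SK_i, SK_j)$ together with the public $T$. The challenger can therefore hand over $SK_i, SK_j$ directly, since it generated them itself.
\item \textbf{$G_3$} replaces $Z$ with a uniformly random group element. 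Any adversary who computes $K$ must either query HKDF on the true $Z$ or distinguish HKDF's output. In the first case we build a CDH solver $\mathcal{B}$: it embeds the challenge $(g^a, g^b)$ as $(epk_i, epk_j)$, generates all long-term keys, certificates, nonces and reports itself (so it can sign $\sigma_i, \sigma_j$ and later reveal $SK_i, SK_j$), and extracts $Z$ from $\mathcal{A}$'s behaviour. This is bounded by $A_8$.
\item \textbf{$G_4$} replaces $K$ with a uniform key, justified by $A_9$ since $Z$ now has full entropy.
\end{itemize}
In $G_4$ the key $K$ is independent of $T$ and of $SK_i, SK_j$. The encrypted Step 3 and 4 payloads and later traffic are then protected by the random-key encryption, so $\mathcal{A}$'s advantage in recovering $K$ is at most $2^{-|K|}$. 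Recovering the encrypted content falls to the same random-key argument; strictly this invokes IND-CPA ($A_1$), which the theorem omits, so I would either note this or restrict the winning condition to recovery of $K$.

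The main obstacle is the $G_2 \to G_3$ step. Two things must be handled there. First, CDH alone does not directly give key indistinguishability. The honest route is to model HKDF's extract step as a random oracle so that a computing adversary must query $Z$, or else to appeal to $A_9$ as stated, reading "high-entropy" as "computationally unpredictable given $T$". I would take the random-oracle route first and note that $\mathcal{B}$ guesses the correct oracle query with loss factor $1/q_H$. Second, the reduction must be able to simulate both signatures and the later corruption without knowing $a, b$. This works because the signatures are computed under $SK$, which $\mathcal{B}$ chose itself, and not under $esk$. Summing the hop losses gives $Adv^{FS}_{LiteAtt}(\mathcal{A}) \le q^2 2^{-128} + q_H \cdot Adv^{CDH}_{X25519} + Adv^{KDF}_{HKDF} + 2^{-|K|} \le negl(\lambda)$.
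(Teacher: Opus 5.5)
Your hybrids follow the same route as the paper's proof, made explicit. The paper likewise argues that $K$ depends on secrets only through $Z$, and that $esk_i, esk_j$ are uniform ($A_3$), confined to the TEE and purged ($A_4$). It then argues that recovering $Z$ from $(epk_i, epk_j)$ is CDH ($A_8$), that HKDF then protects $K$ ($A_9$), and that $SK_i, SK_j$ never enter $K$.

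Your caveat at the $G_2 \to G_3$ hop is genuine, and the paper glosses over it. $Z$ is a deterministic function of $(epk_i, epk_j)$, so it has no entropy given $T$, and $A_9$ as worded (``high-entropy input'') does not apply to it directly. The standard bridges are the random-oracle treatment you describe (with the $q_H$ loss), a computational-entropy reading of $A_9$, or a DDH-type assumption in place of $A_8$. Note also that the paper's phrase ``recovering $K$ given knowledge of $Z$ would require breaking HKDF'' has the direction backwards: $K$ is computable from $Z$ and the public info string. Your remark that recovering encrypted content needs $A_1$, which the theorem omits, is also correct.

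The one step that fails as written is in $G_4$: $K$ is not independent of $T$. The completed transcript contains $m_3, I_3, m_4, I_4$, all computed under $K$ on largely public plaintext ($ID_i$ and $N_2$ travel in the clear). The bound of $2^{-|K|}$ on outputting $K$ therefore is not information-theoretic. It requires key-recovery hardness of both AES-CBC and HMAC-SHA256 under $K$, which is implied by $A_1$ and $A_2$ respectively. Strictly, it also needs a joint-security or key-separation argument, because the protocol reuses the same $K$ for encryption and MAC, so neither single-primitive reduction can simulate the other's outputs.

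Consequently, restricting the winning condition to recovery of $K$ does not let you drop $A_1$. You need either an extra hop that replaces the Step 3--4 ciphertexts and tags with values independent of $K$, or you must add these assumptions to the theorem's hypotheses. The paper's proof has the same blind spot, since it asserts that only $epk_i, epk_j$ are exposed on the channel.
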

\begin{proof}
The session key $K$ is derived as $\text{HKDF}(Z, \cdot)$ where $Z = X25519(esk_i, epk_j) = X25519(esk_j, epk_i)$. The ephemeral scalars $esk_i, esk_j$ are sampled uniformly at random inside the TEE ($A_3$), never leave the TEE boundary ($A_4$), and are securely purged upon session termination. Only the ephemeral public keys $epk_i, epk_j$ appear exposed on the network channel. Recovering $Z$ from $(epk_i, epk_j)$ alone is the structural X25519 CDH problem, bounded by $negl(\lambda)$ under $A_8$. Recovering $K$ given knowledge of $Z$ would require breaking HKDF, bounded by $negl(\lambda)$ under $A_9$. Long-term identity key compromise is therefore mathematically decoupled from the ephemeral session parameters.
\end{proof}

\subsection{Message Security}
We compose AES-CBC encryption with HMAC-SHA256 in an Encrypt-then-MAC (EtM) construction over Steps 3-4 of the protocol. This composition yields IND-CCA2 security from IND-CPA encryption and EUF-CMA MAC. Steps 1-2 are authenticated via transcript signatures (Theorem \ref{thm:auth}).

\subsubsection{Confidentiality} (Game $G_{CONF}$) $\mathcal{A}$ observes network traffic, has access to $\mathcal{O}_{Enc}(\cdot)$ under an unknown key, and submits two equal-length plaintexts $m_a, m_b$ to $\mathcal{C}$. $\mathcal{C}$ selects $x \in \{a,b\}$ uniformly and returns the EtM ciphertext $c_x$. $\mathcal{A}$ outputs $x' \in \{a, b\}$.

\begin{theorem}
\label{thm:conf}
Under $(A_1, A_2, A_5, A_8, A_9)$, $Adv^{CONF}_{LiteAtt}(\mathcal{A}) = \left|Pr[x'=x] - \tfrac{1}{2}\right| \leq negl(\lambda)$.
\end{theorem}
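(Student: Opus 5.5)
The argument is a short sequence of game hops from $G_{CONF}$ to a game where the challenge ciphertext is independent of $x$. Each hop is charged to one of the listed assumptions.

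\emph{Game 0} is $G_{CONF}$ as defined, with the EtM ciphertext computed under the session key $K \leftarrow \text{HKDF}(Z, ID_i \| ID_j \| N_1 \| epk_i \| epk_j)$ of a completed handshake.

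\emph{Game 1} additionally requires that the ephemeral keys $epk_i, epk_j$ of the challenge session were honestly generated and not injected by $\mathcal{A}$. By Theorem~\ref{thm:auth}, resting on $A_2$ and $A_5$, an injected key would require forging $\sigma_i$ or $\sigma_j$. Hence Games 0 and 1 differ by at most $Adv^{AUTH}_{LiteAtt}(\mathcal{A}) \leq negl(\lambda)$.

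\emph{Game 2} replaces $Z$ by a value the adversary cannot compute. \emph{Game 3} then replaces $K$ by a uniformly random key. In Game 1, $\mathcal{A}$ sees only $(epk_i, epk_j)$, so computing $Z$ is the CDH problem of $A_8$. Given that $Z$ is unpredictable, $A_9$ makes $K$ indistinguishable from uniform. The reduction is an HKDF distinguisher that embeds its challenge as $K$ and simulates every other protocol message and the $\mathcal{O}_{Enc}$ oracle using that key.

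\emph{Reduction to $A_1$.} In Game 3 the challenge is $c_x = (Enc(m_x,K), HMAC(Enc(m_x,K),K))$ under a random key. The reduction to AES-CBC IND-CPA ($A_1$) forwards $m_a, m_b$ to the IND-CPA challenger and receives $Enc(m_x,K)$. It answers $\mathcal{O}_{Enc}$ queries with its own encryption oracle and computes the MAC tags itself.

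The \emph{main obstacle} is that HMAC in the paper uses the same key $K$ as encryption, so the reduction does not hold the MAC key. The first remedy is to model HKDF as producing independent encryption and MAC subkeys, $K = K_e \| K_m$, so that the reduction samples $K_m$ itself. This matches standard EtM practice and the paper's claim that EtM gives IND-CCA2 security. Otherwise one must assume joint security of AES and HMAC under a shared key. $A_2$ enters only if decryption oracles are considered: ciphertexts with invalid tags are rejected except with EUF-CMA advantage. Summing the losses gives $|Pr[x'=x]-\tfrac12| \leq Adv^{AUTH} + Adv^{CDH}_{X25519} + Adv^{KDF} + Adv^{IND-CPA}_{AES} + Adv^{EUF-CMA}_{HMAC} \leq negl(\lambda)$.
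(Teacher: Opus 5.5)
Your proof takes essentially the same route as the paper's, which asserts in one line that $K$ stays hidden by CDH ($A_8$) and HKDF ($A_9$) and then appeals to the IND-CCA2 security of Encrypt-then-MAC; your game hops are a more careful rendering of that argument, adding an authentication hop against injected ephemeral keys and reducing directly to IND-CPA since $G_{CONF}$ only offers $\mathcal{O}_{Enc}$. The key-separation obstacle you raise is genuine and applies equally to the paper's proof, because generic EtM composition assumes independent encryption and MAC keys while the protocol keys both AES-CBC and HMAC with the same $K$; likewise, your Game~2 is not a real hop, since going from CDH to a uniform $K$ relies, as in the paper, on reading $A_9$ as applying directly to the X25519 secret given $(epk_i, epk_j)$ (CDH alone would otherwise require a decisional or random-oracle argument).
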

\begin{proof}
This is directly defended from IND-CCA2 security of EtM, applied to transport messages $m_3, m_4$ under the freshly-derived $K$. The session key $K$ remains hidden under CDH ($A_8$), HKDF indistinguishability ($A_9$), and TEE protection ($A_4, A_5$). Confidentiality applies to post-handshake transport content and attestation reports themselves carry only the binary outcome $\gamma$ and freshness metadata, which are not considered confidential in our threat model.
\end{proof}

\subsubsection{Integrity} (Game $G_{INT}$) $\mathcal{C}$ provides $\mathcal{A}$ with $(ID, m, I)$ for $I = HMAC(m, K)$. $\mathcal{A}$ wins if it outputs $(m', I') \neq (m, I)$ that is accepted as authentic.

\begin{theorem}
\label{thm:int}
Under $(A_2, A_5)$, $Adv^{INT}_{LiteAtt}(\mathcal{A}) \leq negl(\lambda)$.
\end{theorem}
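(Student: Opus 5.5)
The plan is a direct reduction to the EUF-CMA security of HMAC-SHA256 from $A_2$, with $A_5$ used to argue that the MAC key $K$ is unavailable to $\mathcal{A}$ except through the protocol.

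\textbf{Step 1: the reduction.} Given an adversary $\mathcal{A}$ that wins $G_{INT}$ with advantage $\delta$, we build a forger $\mathcal{B}$ against HMAC under an unknown key. $\mathcal{B}$ simulates $\mathcal{C}$. Whenever $\mathcal{A}$ needs a tag $I = HMAC(m, K)$, including the challenge tuple $(ID, m, I)$ and any authenticated transport messages of Steps 3--4, $\mathcal{B}$ obtains it from its own MAC oracle. When $\mathcal{A}$ outputs $(m', I') \neq (m, I)$ that passes verification, $\mathcal{B}$ outputs the same pair.

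We split into two cases.
\begin{itemize}
\item If $m' \neq m$ and $m'$ was never queried to the oracle, then $(m', I')$ is an existential forgery.
\item If $m' = m$ but $I' \neq I$, then $HMAC(m, K)$ would admit two valid tags. This is impossible, since HMAC is deterministic and verification recomputes the tag and compares. So this case contributes zero advantage.
\end{itemize}
Hence $Adv^{INT}_{LiteAtt}(\mathcal{A}) \le Adv^{EUF\text{-}CMA}_{HMAC}(\mathcal{B}) \le negl(\lambda)$.

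\textbf{Step 2: the simulation must be faithful, i.e.\ $K$ must truly be unknown to $\mathcal{A}$.} This is the main obstacle. In the protocol, $K$ is not a fresh random key but $\text{HKDF}(Z, ID_i\|ID_j\|N_1\|epk_i\|epk_j)$.

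By $A_5$ and the TEE boundary, $K$ and the ephemeral scalars are never exposed to the unsecure world or the network. I would argue this explicitly with a hybrid step: replace $K$ by a uniformly random key. The hybrid gap is bounded by CDH on $(epk_i, epk_j)$ and HKDF pseudorandomness, as already used in Theorems \ref{thm:fs} and \ref{thm:conf}. Since the theorem statement lists only $(A_2, A_5)$, I would either fold this hybrid into the game definition (the game says ``unknown key'') or note that it contributes an additional $negl(\lambda)$ term.

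\textbf{Step 3: integrity of Steps 1--2.} For completeness, we cover messages that carry no HMAC. Steps 1--2 are integrity-protected by the Ed25519 transcript signatures $\sigma_i, \sigma_j$. Any modification of $(epk, N, \mathcal{R})$ that is still accepted yields an Ed25519 forgery, which is again bounded by $A_2$, with $A_5$ keeping $SK$ inside the TEE.

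Summing the negligible terms over the polynomially many sessions gives the claimed bound.
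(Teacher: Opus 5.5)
This follows the same route as the paper. The paper's proof likewise:
\begin{itemize}
\item reduces acceptance of $(m',I')$ to an HMAC forgery under $A_2$;
\item appeals to the TEE and CDH for the secrecy of $K$, thereby also using $A_4$ and $A_8$ beyond the stated $(A_2,A_5)$;
\item covers Steps 1--2 by the Ed25519 signatures.
\end{itemize}
Your remark that the $K$-hybrid costs extra terms is well taken, and you correctly include $A_9$, which the paper omits here. However, two steps of your more explicit reduction fail as written, and the paper's short proof does not address them either.

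\textbf{Shared key.} The protocol keys \emph{both} AES-CBC and HMAC with the same $K$: $m_3\leftarrow Enc(ID_i\|N_2\|N_3,K)$ and $I_3\leftarrow HMAC(m_3,K)$. If $\mathcal{B}$ embeds its EUF-CMA challenge key as $K$, it cannot compute the ciphertexts $m_3,m_4$ that you have it hand to $\mathcal{A}$. Nor can it answer the $\mathcal{O}_{Enc}$ queries the threat model grants. So the simulation is not faithful.

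This cannot be repaired from $A_1$ and $A_2$ as stated, because separately secure primitives can fail under a shared key. Write $K=(k_1,k_2)$, let encryption use $k_1$ and append $k_2$ to every ciphertext, and let the MAC use $k_2$. Each scheme is secure on its own, yet every ciphertext leaks the MAC key. The generic Encrypt-then-MAC theorem the paper invokes also presumes independent keys. You need key separation, e.g.\ $K_{enc}\|K_{mac}\leftarrow\text{HKDF}(Z,\ldots)$ with distinct labels, so that $\mathcal{B}$ samples $K_{enc}$ itself in the uniform-key hybrid. Alternatively, add an explicit joint-security assumption.

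\textbf{Incomplete case split.} Your cases omit $m'\neq m$ with $m'$ already tagged by an honest party or oracle. For example, $\mathcal{A}$ returns $(m_4,I_4)$ when the challenge is $(m_3,I_3)$, or any answer of the MAC oracle in the threat model. Such a pair passes HMAC verification and differs from $(m,I)$. So $\mathcal{A}$ wins $G_{INT}$ as literally stated while $\mathcal{B}$ has no forgery to output; with a MAC oracle the game is trivially winnable.

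The bound $Adv^{INT}_{LiteAtt}(\mathcal{A})\le Adv^{EUF\text{-}CMA}_{HMAC}(\mathcal{B})$ holds only once the winning condition excludes every pair ever produced by an honest party or the oracle. Your determinism argument then correctly supplies the strong unforgeability this needs.

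\textbf{A smaller point, inherited from the paper's forward-secrecy and confidentiality proofs.} CDH alone does not make $K$ indistinguishable from uniform. Also, $A_9$ presupposes a high-entropy input, whereas $Z$ is fully determined by $(epk_i,epk_j)$. The hybrid needs either DDH with HKDF as a randomness extractor, or CDH with HKDF modeled as a random oracle.
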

\begin{proof}
Acceptance of $(m', I')$ requires $I' = HMAC(m', K)$. Producing such a tag without $K$ is the EUF-CMA forgery experiment, which under $A_2$ has advantage at most $negl(\lambda)$. The session key $K$ is itself protected by $A_4, A_5$ and the CDH hardness of its derivation by $A_8$. The integrity of handshake messages in Steps 1-2 is protected by Ed25519 transcript signatures under $A_2$.
\end{proof}

\subsection{Attestation}
Attestation comprises the integrity of $\text{App}_{SA}$ execution and the privacy of SRAM contents.

\subsubsection{Firmware Impersonation} (Game $\mathcal{G}_{IMP}$) $\mathcal{A}$ controls $\mathcal{P}_i$'s unsecure world and may replace $\mathcal{F} \rightarrow \mathcal{F}_m$ at any time. $\mathcal{A}$ wins if $\mathcal{P}_j$ accepts a session with $\mathcal{P}_i$ in which $\mathcal{P}_i$'s actual executing firmware at the time of acceptance is $\mathcal{F}_m$. We partition $\mathcal{A}$'s capabilities into two classes to obtain meaningful bounds: $\mathcal{A}_{obliv}$ has no query access to $M_{lite}$ and applies perturbations chosen independently of the model. In contrast, $\mathcal{A}_{adapt}$ has offline access to two authorized devices, one of which is the same-firmware twin, and may train a generator $\widetilde{M}$ using its twin-oracle to optimize perturbations on the SRAM to fool $M_{lite}$. We further sub-partition $\mathcal{A}_{adapt}$ by the firmware-specific information it holds about the target's benign SRAM manifold. $\mathcal{A}_{adapt}^{none}$ has twin-oracle access but no benign SRAM of any firmware, while $\mathcal{A}_{adapt}^{xfw}$ has benign SRAM samples of any of the \emph{other} firmware running on the same MCU class, but none of the target firmware, and $\mathcal{A}_{adapt}^{1}$ has exactly one leaked benign SRAM sample of the target firmware (e.g., from a single side-channel exposure).

\begin{theorem}
Under $(A_2, A_4-A_7, A_{10})$, $\mathcal{A}$'s advantage is bounded by
\begin{equation}
\label{eq:obliv-adv}
\text{Adv}^{\text{IMP-obliv}}_{\textit{LiteAtt}}(\mathcal{A}) \leq FNR_{\text{rand}}(\eta) + \text{negl}(\lambda),
\end{equation}
\begin{equation}
\label{eq:adapt-adv}
\text{Adv}^{\text{IMP-adapt}}_{\textit{LiteAtt}}(\mathcal{A}) \leq FNR_{\text{adapt}}(\eta) + \text{negl}(\lambda),
\end{equation}
where $FNR_{\text{rand}}(\eta)$ is the empirically measured FNR of $M_{lite}$ under random byte perturbation of extent $\eta$ (Fig.~\ref{fig:sens-random}), and $FNR_{\text{adapt}}(\eta)$ is the FNR under adaptive perturbation (Fig. \ref{fig:sens-adapt-vs-obliv}).
\end{theorem}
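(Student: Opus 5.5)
The plan is a standard game-hopping reduction that splits $\mathcal{A}$'s winning event in $\mathcal{G}_{IMP}$ into a cryptographic part and a statistical part. Let $W$ be the event that $\mathcal{P}_j$ accepts a session with $\mathcal{P}_i$ while $\mathcal{F}_m$ is executing. Acceptance requires $\delta_i=0$ from $\text{App}_{SA}.\texttt{Validate}$ together with a valid $\sigma_i$ under $PK_i$ bound by $\text{Cert}_i$. I would therefore partition $W$ into three disjoint cases:
\begin{itemize}
\item[(a)] the accepted $(\mathcal{R}_i,\sigma_i)$ was not produced by $\text{App}_{SA}.\texttt{Generate}$ inside $\mathcal{P}_i$'s TEE during the current session;
\item[(b)] it was produced by a genuine TEE invocation, but at a time when a different (benign) firmware was executing;
\item[(c)] it was produced by a genuine TEE invocation over the SRAM of $\mathcal{F}_m$ and yielded $\gamma=0$.
\end{itemize}
This gives $\Pr[W]\le \Pr[(a)]+\Pr[(b)]+\Pr[(c)]$.

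\textbf{Case (a).} I would reduce to EUF-CMA of Ed25519 under $A_2$. The unsecure world cannot sign with $SK_i$ ($A_5$) and cannot alter the $\gamma$ field, the $ctx$ field, or the code of $\text{App}_{SA}$ ($A_4$). A flipped verdict, or a report lifted from another session, would then be either a forgery over a fresh message or a replay. The replay is rejected by the $ctx$ and $\epsilon$ checks, as in Theorem~\ref{thm:replay} under $A_7$. Either way $\Pr[(a)]\le negl(\lambda)$, using the construction from Theorem~\ref{thm:auth}.

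\textbf{Case (b).} This is the TOCTOU case. The adversary benignly generates a report and only then swaps in $\mathcal{F}_m$ before acceptance. Three facts close it off. $A_{10}$ says a firmware swap forces a reset, which destroys $esk_i$ and the session state, so the old report cannot complete the handshake. Key confirmation in Steps 3--4 requires the live $K$. And the $\epsilon$ window together with $ctx$ binding prevents carrying the report over to a new session. I would treat runtime (non-flash) perturbations as part of $\mathcal{F}_m$'s SRAM state, which places them in case (c). Hence $\Pr[(b)]\le negl(\lambda)$.

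\textbf{Case (c).} This is the statistical core. Conditioned on honest TEE execution, $\gamma=0$ on a malicious SRAM state is exactly a false negative of the disjunctive detector $(M^A_{lite},M^B_{lite})$. The main obstacle is justifying that the empirical rates upper-bound this probability. By $A_6$, and since the models were trained on twin data, the benign manifold on the deployed device matches the one the rates were measured against.

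For $\mathcal{A}_{obliv}$, the perturbation is chosen independently of $M_{lite}$. I would argue that any model-independent $\eta$-extent modification is dominated by the uniform random-byte perturbation of Fig.~\ref{fig:sens-random}, so $\Pr[(c)]\le FNR_{\text{rand}}(\eta)$. This is the weakest step, because that dominance is an empirical premise rather than a proven fact, and I would state it as such.

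For $\mathcal{A}_{adapt}$, the adversary has no direct query access to $M_{lite}$ (threat model), and $A_4$ hides the model and thresholds. Its best strategy is therefore the offline twin-oracle optimization instantiated in Section~\ref{sec:sim-attacks}. I would set $FNR_{\text{adapt}}(\eta)$ to be the maximum over the three tiers in Fig.~\ref{fig:sens-adapt-vs-obliv}, which gives $\Pr[(c)]\le FNR_{\text{adapt}}(\eta)$.

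Summing the three cases yields both Equations~(\ref{eq:obliv-adv}) and~(\ref{eq:adapt-adv}).
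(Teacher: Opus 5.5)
This is essentially the paper's proof. Your disjoint cases map onto its four strategies: (a) merges direct tampering of $\gamma$ inside the TEE (ruled out by $A_4$) with transcript/report spoofing (EUF-CMA under $A_2, A_5$); (b) is the TOCTOU firmware-swap case closed by $A_{10}$; and (c) is the evasion case bounded by the empirical $FNR_{\text{rand}}(\eta)$ and the tiered $FNR_{\text{adapt}}(\eta)$ values, where the paper supports the oblivious bound with the same heuristic DCT-energy claim that random-byte overwrite dominates every oblivious perturbation, which you rightly flag as the weakest step.
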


\begin{proof}
$\mathcal{A}$ has four non-mutually-exclusive strategies, which we bound:

\textbf{\textit{(i) Evasion of $M_{lite}$:}} $\mathcal{A}$ executes $\mathcal{F}_m$ and relies on $M_{lite}$ producing $\gamma = 0$. The bound on this case depends on $\mathcal{A}$'s access to the model and SRAM.

For $\mathcal{A}_{obliv}$, the perturbation distribution is independent of $M_{lite}$. The DCT retains a fixed low-frequency band $[0, F)$, and for any oblivious perturbation $P$ over $\eta$-byte modifications, the expected energy in the retained band is upper-bounded by the energy induced by uniformly random byte replacement. Random-byte overwrite is therefore the structurally weakest perturbation, and any oblivious attack that perturbs $\geq \eta\%$ of bytes induces at least as much detectable DCT deviation in expectation. The empirical sensitivity analysis (Fig.~\ref{fig:sens-random}) thus yields:

\begin{equation}
FNR_{\text{rand}}(\eta) \leq
\begin{cases}
0.174 & \eta \geq 1\%, \\
0.042 & \eta \geq 2\%, \\
0.001 & \eta \geq 5\%, \\
0 & \eta \geq 10\%.
\end{cases}
\end{equation}

This argument does not hold for $\mathcal{A}_{adapt}$. By $A_4$, $\mathcal{A}_{adapt}$ has (a) no \textit{direct} query access to $M_{lite}, \mathcal{T}_{opt}$, and (b) at most one bit of feedback per session via the handshake outcome. With offline access to two authorized nodes, $\mathcal{A}$ may train a generator $\widehat{M}$ to construct evasive states. Empirical analysis of the three sub-tiers of $\mathcal{A}_{adapt}$ over 23 firmware, each with 30 malicious traces and $Q{=}5{,}000$ yields:
\begin{equation}
\label{eq:adapt-tiered}
FNR_{\text{adapt}}^{tier}(\eta) \leq
\begin{cases}
0 & tier \in \{none, xfw\},\ \forall \eta \in [1, 10]\%, \\
0.077 & tier = 1,\ \eta = 1\%, \\
0.125 & tier = 1,\ \eta = 2\%, \\
0.186 & tier = 1,\ \eta = 5\%, \\
0.287 & tier = 1,\ \eta = 10\%.
\end{cases}
\end{equation}

$\mathcal{A}_{adapt}^{none}$ and $\mathcal{A}_{adapt}^{xfw}$ achieve zero advantage. Specifically, $\widehat{M}$ in $\mathcal{A}_{adapt}^{xfw}$ fails since the SRAM seeds used to generate benign perturbations were unique to the source, highlighting the distribution differences between the SRAM dumps from different firmware. By gaining access to benign SRAM, only $\mathcal{A}_{adapt}^{1}$, accrues non-zero advantage, exceeding $FNR_{\text{rand}}$ by $\sim 3\!\times\!$ at $\eta\!=\!2\%$ and by orders of magnitude for $\eta\!\geq\!5\%$. Thus, we emphasise \emph{benign-SRAM extraction} as the vulnerability and recommend IoT vendors to block access to physical debug-ports to prevent exploitation. This residual advantage is an inherent limitation of all ML-based attestation primitives, including RAGE \cite{chilese2024one}, Swarm-Net \cite{kohli2024swarm}, SAFE-IoT \cite{kohli2024safe}, Aman et al. \cite{aman2022machine}, and Iqbal et al. \cite{iqbal2024ram}.

\textbf{\textit{(ii) Direct modification of $\gamma$:}} $\mathcal{A}$ attempts to overwrite $\gamma$ inside the TEE between inference and report packing. This is ruled out by TEE isolation under $A_4$, contributing at most $\text{negl}(\lambda)$.

\textbf{\textit{(iii) Transcript and report spoofing:}} $\mathcal{A}$ fabricates a handshake packet containing a dummy report $\mathcal{R}$ with $\gamma = 0$ without executing $\text{App}_{SA}$. Acceptance by $\mathcal{P}_j$ requires a valid transcript signature $\sigma_i \leftarrow \text{Sign}(epk_i\|N_1\|\mathcal{R}_i, SK_i)$, which is an EUF-CMA forgery experiment bounded by $\text{negl}(\lambda)$ under $A_2, A_5$.

\textbf{\textit{(iv) Replay-with-firmware-swap (TOCTOU):}} $\mathcal{A}$ could generate a valid $\mathcal{R}_t$ with $\gamma = 0$ under benign $\mathcal{F}$ and then swap to $\mathcal{F}_m$ during the session. However, under $A_{10}$, firmware replacement requires a device reset, which terminates the active session and destroys all ephemeral key material. $\mathcal{P}_j$ detects the disconnection. When $\mathcal{P}_i$ reconnects, a fresh handshake is initiated, generating a new attestation snapshot of the now-executing $\mathcal{F}_m$, which $M_{lite}$ detects with the bounds established in case (i). Consequently, no stale report can be injected into a live session, and this strategy contributes no additional advantage beyond (i). Combining (i)-(iv), strategies (ii)-(iv) collectively contribute at most $\text{negl}(\lambda)$ while strategy (i) dominates the residual advantage.
\end{proof}
 
\subsubsection{SRAM Privacy} (Game $G_{PRIV}$) $\mathcal{A}$ submits two SRAM contents $S_a, S_b$ with identical SA outcomes $\gamma(S_a) = \gamma(S_b)$. $\mathcal{C}$ selects $x \in \{a, b\}$, loads $S_x$ into the TEE, and runs $\text{App}_{SA}.\texttt{Generate}$. $\mathcal{A}$ observes the resulting network traffic and outputs $x'$.
 
\begin{theorem}
\label{thm:priv}
Under $(A_{4-6})$, $Adv^{PRIV}_{LiteAtt}(\mathcal{A}) = \left|Pr[x'=x] - \tfrac{1}{2}\right| \leq negl(\lambda)$.
\end{theorem}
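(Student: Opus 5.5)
The plan is to show that the adversary's view in $G_{PRIV}$ has (computationally) the same distribution whether $x=a$ or $x=b$. Once that holds, any distinguisher has only negligible advantage. The argument follows what actually leaves the TEE.

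\textbf{Step 1: list the observable traffic.} By Algorithm~\ref{algo:generate} and the protocol of Fig.~\ref{fig:self-att-protocol}, the network messages are
\[(ID_i, \text{Cert}_i, epk_i, N_1, \mathcal{R}_i, \sigma_i),\]
the symmetric Step~2 message, and the encrypted messages $(m_3, I_3), (m_4, I_4)$. The report is $\mathcal{R}=(ID,\gamma,t,N,ctx)$, and its only SRAM-dependent field is $\gamma$.

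\textbf{Step 2: argue that no other SRAM-dependent value leaves the TEE.} By $A_4$, the raw bytes $B_{0:L}$, the DCT vectors $S^A,S^B$, the reconstructions, the MSE values, $M_{lite}$ and $\mathcal{T}_{opt}$ are all confined to the TEE. By $A_5$, the same holds for $SK$. The remaining fields are independent of $S_x$:
\begin{itemize}
\item $ID$ and $\text{Cert}$ are fixed.
\item $epk$, $N_1,\dots,N_4$ and the report nonce $N$ come from $PRNG$, whose outputs are independent of the SRAM (appealing to $A_3$ implicitly, or treating the PRNG as TEE-internal and unaffected by $S_x$).
\item $t$ is a clock value.
\item $ctx$ is built only from $epk$ and the nonces.
\end{itemize}

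\textbf{Step 3: handle the signatures and ciphertexts.} The signature $\sigma$ is a deterministic function (Ed25519) of $SK$ and of a transcript whose distribution, by Step~2, does not depend on $x$. Since $\gamma(S_a)=\gamma(S_b)$, the whole signed transcript has the same distribution in both worlds, so $\sigma$ adds nothing. The ciphertexts $m_3,m_4$ encrypt only IDs and nonces, never SRAM data, so their distribution is also identical across $x$. This gives the key step: the full view is distributed identically for $x=a$ and $x=b$, which yields advantage $0$. The $negl(\lambda)$ slack in the statement covers the PRNG idealisation and any leakage through the TEE boundary that $A_4$ rules out only computationally.

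\textbf{Step 4: the main obstacle, which is timing.} Algorithm~\ref{algo:generate} evaluates $M^B$ only when $\gamma=0$ after $M^A$. Because $\gamma(S_a)=\gamma(S_b)$, both executions take the same branch. Residual data-dependent timing, such as DCT cost depending on $\lambda$, falls under the side-channel exclusion in $A_4$. The plan is to state this explicitly, assuming twins share $\lambda$ and $L$ (via $A_6$) so that section lengths are equal.

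Finally, I would note where $A_6$ enters: the models are trained on twin data, so deployment never requires exporting user SRAM to the vendor. This complements the in-protocol indistinguishability shown above.
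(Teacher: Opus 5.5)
Your proposal is correct and takes essentially the same approach as the paper's proof. Both rest on two facts: by $A_4$, every SRAM-dependent quantity other than the verdict $\gamma$ stays inside the TEE, and $\gamma$ is identical in both worlds by the game's construction, so the transmitted identities, freshness metadata, transcript signatures and ciphertexts are distributed identically. Your treatment of the deterministic Ed25519 signature, the SRAM-free contents of $m_3,m_4$, and the $\gamma$-determined branch of Algorithm~\ref{algo:generate} spells out details the paper leaves implicit.
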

\begin{proof}
The raw SRAM contents $S_x$ are processed entirely within the TEE under $A_4$ and never appear on any network message. The only $S_x$-dependent value communicated outside the TEE is the binary verdict $\gamma$, which by the game's construction is identical for both $S_a$ and $S_b$. The transmitted handshake elements contain $\gamma$, freshness metadata, and a transcript signature. None of these reveal information about the raw SRAM layout beyond what is explicitly encoded in the low-dimensional public fields. The privacy property captures the confidentiality of the SRAM contents sampled during attestation. Note that \textit{LiteAtt} does not hide the binary safe/unsafe verdict $\gamma$ from an authorized peer since the information revealed by it is redundant to the observed failure or success of the protocol.
\end{proof}

\subsection{Summary}
The protocol provides cryptographic security ($Adv\leq negl(\lambda)$) for mutual authentication, replay resistance, forward secrecy, transport-message confidentiality and integrity, and SRAM privacy. Attestation reports are bound to the session transcript via protocol-level signatures and explicit transcript context verification inside the TEE application, providing integrity and authenticity without requiring on-the-wire confidentiality of the binary outcome $\gamma$. Firmware impersonation remains a statistical guarantee whose tightness depends on the empirical robustness of $M_{lite}$ to adversarial inputs and the freshness parameters. We argue this is an inherent limit of any ML-based attestation primitive.

\section{Limitations} 
\label{sec:limitations} 
While TEEs provide significant security provisions \cite{jauernig2020trusted}, they remain prone to side-channel and physical attacks that are outside our scope. Further, side-channel attacks that extract information about the SRAM enable $M_{lite}$ spoofing. Mitigation such as adversarial training of section AEs, randomized DCT subspace selection, and constant-time, split-cache TEE implementations \cite{weissteiner2025teecorrelate} are promising directions to further harden the $\mathcal{A}_{adapt}^{1}$ case.

\section{Conclusion}
\label{sec:conclusion}

This paper presented \textit{LiteAtt}, a novel verifier-less, P2P-SA framework for IoT devices that leverage int8-quantized TinyAEs, SRAM runtime analysis, and Arm TrustZone TEE to provide on-device firmware integrity verification with stateless peers. Unlike prior remote attestation methods that rely on capable external verifiers or peers that match received evidence with reference states, \textit{LiteAtt} enables IoT devices to independently assess the integrity of their own firmware state and furnish secure, transcript-bound SA reports during routine connection handshakes with peer devices that do not need up-to-date reference states or ML models. \textit{LiteAtt} was validated on comprehensive SRAM datasets collected from real Arduino boards, maintaining an average 99.42\% accuracy, 99.70\% F1-score, 99.45\% TPR, and 95.14\% TNR while dropping full mutual attestation and key agreement overheads to a mere 26.3-294.9ms handshake latency, 2.65-9.35mJ energy consumption, and 4.91KB peak memory overhead across three Arm Cortex-M boards. The proposed decentralized approach enables IoT vendors to train models on twin hardware devices, thereby preserving localized user SRAM privacy and facilitating seamless OTA security updates. Furthermore, formal game-based evaluations verified that the \textit{LiteAtt} protocol guarantees mutual authentication, forward secrecy, transport confidentiality, data integrity, and strict resistance against real-time replay and TOCTOU impersonation attacks.

\bibliographystyle{IEEEtranN}
{\footnotesize\bibliography{references}}
\vskip -2\baselineskip plus -1fil
\begin{IEEEbiography}[{\includegraphics[width=1in,height=1.25in,clip,keepaspectratio]{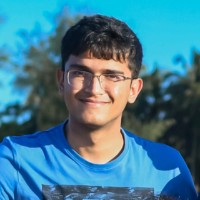}}]{Varun Kohli} is a Scientist at the Institute for Infocomm Research ($I^2R$), Agency of Science, Technology and Research (A*STAR), Singapore, and a Ph.D. student at the Department of Electrical and Computer Engineering at the National University of Singapore. He received his B.E. in Electrical and Electronics Engineering from the Birla Institute of Technology and Science, Pilani, India, in 2021. His research interests include Artificial Intelligence, IoT, and Cybersecurity.
\end{IEEEbiography}
\vskip -2\baselineskip plus -1fil
\begin{IEEEbiography}[{\includegraphics[width=1in,height=1.25in,clip,keepaspectratio]{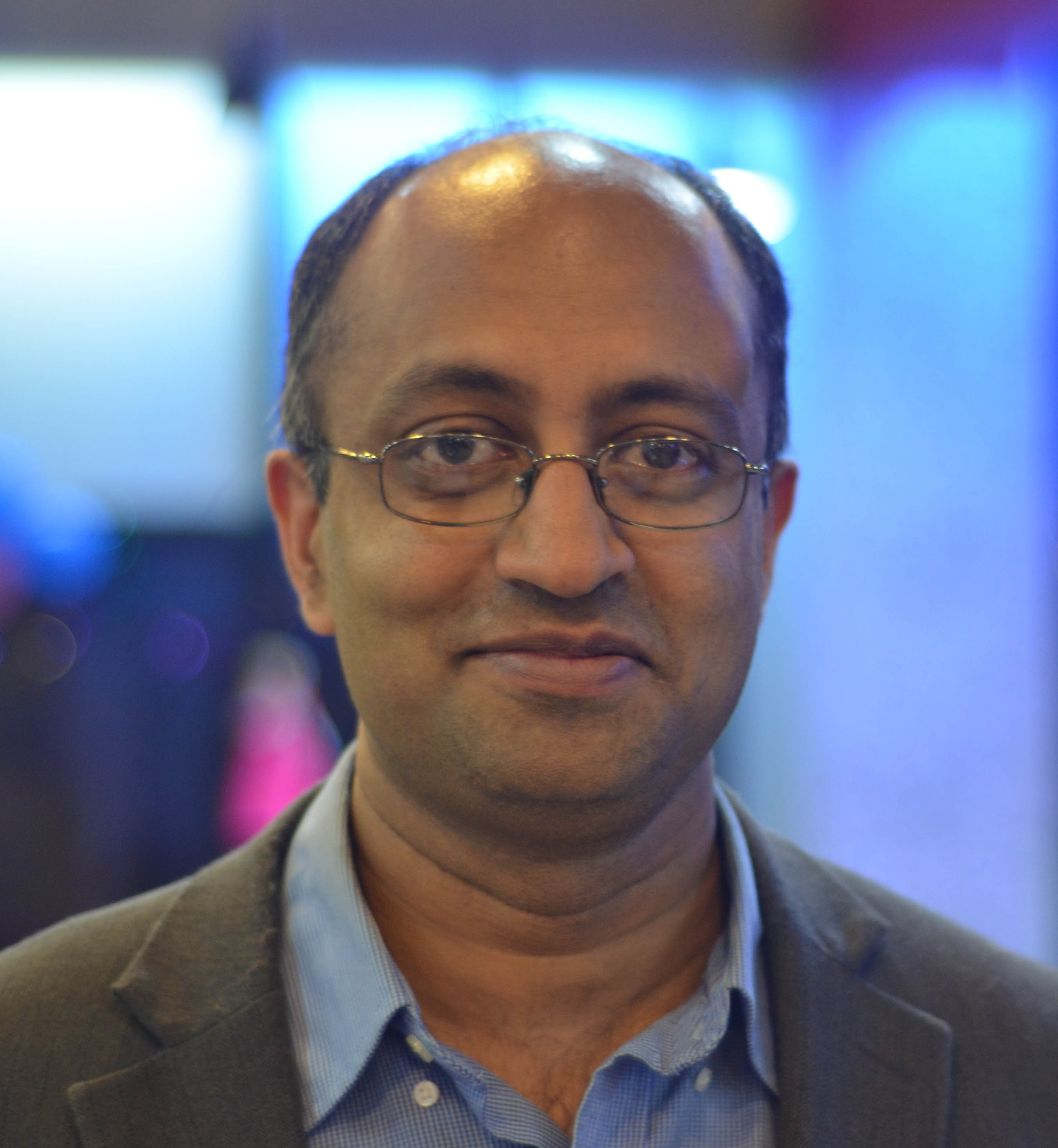}}]{Biplab Sikdar} received the B.Tech. degree in electronics and communication engineering from North Eastern Hill University, Shillong, India, in 1996, the M.Tech. degree in electrical engineering from the Indian Institute of Technology, Kanpur, India, in 1998, and the Ph.D. degree in electrical engineering from the Rensselaer Polytechnic Institute, Troy, NY, USA, in 2001. 
He is currently a Professor with the Department of Electrical and Computer Engineering, National University of Singapore, Singapore. His research interests include wireless network, and security for IoT and cyber-physical systems. 
\end{IEEEbiography}
\vfill
\end{document}